\documentclass[12pt]{article}

\usepackage[a4paper,margin=30mm]{geometry}

\usepackage{amsmath,amssymb,amsfonts,amsthm}
\usepackage{mathrsfs}
\usepackage{graphicx}
\usepackage[percent]{overpic}
\usepackage{booktabs}
\usepackage{multirow}
\usepackage{xcolor}
\usepackage{mathtools}
\usepackage{cite}
\usepackage{hyperref}
\usepackage{comment}
\usepackage{stmaryrd}

\usepackage{tikz}
\usepackage{pgfplots}
\usetikzlibrary{math,calc,automata,arrows,positioning}
\pgfplotsset{compat=1.18}

\usepackage{authblk}

\newtheorem{thm}{Theorem}[section]
\newtheorem{prop}[thm]{Proposition}
\newtheorem{cor}[thm]{Corollary}
\newtheorem{lem}[thm]{Lemma}

\allowdisplaybreaks[4]
\newcommand{\la}{\langle}
\newcommand{\ra}{\rangle}
\newcommand{\ldbr}{\llbracket}
\newcommand{\rdbr}{\rrbracket}

\title{Stationary Measures and Mean Flux \\
Depending on Multiple Conserved Quantities \\
in a Stochastic Cellular Automaton}

\author[1]{Kazushige Endo\thanks{Corresponding author: \texttt{k-endo@math.kindai.ac.jp}}}
\author[2]{Shinsuke Iwao}
\author[3]{Hidetomo Nagai}
\author[4]{Yushi Nakano}

\affil[1]{Department of Mathematics, Kindai University, Osaka, Japan}
\affil[2]{Faculty of Business and Commerce, Keio University, Kanagawa, Japan}
\affil[3]{Department of Mathematics, Tokai University, Kanagawa, Japan}
\affil[4]{Faculty of Science, Hokkaido University, Hokkaido, Japan}

\date{}

\begin{document}

\maketitle

\begin{abstract}    
We analyze a stochastic 5-neighbor cellular automaton with several conserved quantities, including the particle density.
By examining the eigenvalue problem of the associated transition matrix, we derive an explicit formula for the stationary distribution on each irreducible component, in which the weight of each configuration is expressed in terms of the numbers of occurrences of two specific local patterns. This analysis further allows us to theoretically derive the dependence of the mean flux on the conserved quantities. In particular, we recover the mean flux formula in the deterministic case by taking the zero-noise limit of the system.
\end{abstract}

\section{Introduction}\label{1}
Cellular automata have been studied for various fields such as physics, engineering, and mathematics. 
In particular, the dependence of the particle momentum on particle density in their asymptotic behavior has been one of the central topics in the analysis of cellular automata \cite{fuks, nishinari}.
For example, the asymmetric simple exclusion process (ASEP) is a popular stochastic cellular automaton which is a multibody random walk model where each particle moves stochastically along one-dimensional lattice space\cite{spitzer, derrida1, derrida2}. 
The stationary measure of ASEP has been investigated in connection with orthogonal polynomials \cite{sasamoto}.
Another important example is the Nagel-Schreckenberg model, which is related to traffic flow \cite{nagel, ns}.
Nagel and Schreckenberg theoretically analyzed the mechanism of traffic jams, which depends on the density of cars in the traffic system.

In the present paper, we investigate the asymptotic behavior of a stochastic 5-neighbor cellular automaton and analyze the dependence of its mean flux on \emph{two} conserved quantities.
In contrast to the ASEP, where the mean flux is uniquely determined by the particle density, the mean flux in our system depends not only on the particle density but also on an additional conserved quantity.

First, we recall the properties of a deterministic cellular automaton with 5-neighbors investigated in \cite{3d}, described by
\begin{equation}
\label{dsystem}
u_j^{n+1}=u_j^n+q_0\left(u_{j-2}^n,u_{j-1}^n,u_j^n,u_{j+1}^n\right)-q_0\left(u_{j-1}^n,u_j^n,u_{j+1}^n,u_{j+2}^n\right).
\end{equation}
Here, the variable $u$ is binary, taking values $0$ or $1$, where $j$ denotes the site index and $n$ denotes the discrete time step.
The function $q_0$ represents the flux, which is specified in Table~\ref{flux}.
\begin{table}[h]
\caption{Rule table of $q_0(w,x,y,z)$ in (\ref{dsystem}). Upper and lower rows denote $(w,x,y,z)$ and $q_0(w,x,y,z)$ respectively.}
\label{flux}
\begin{center}
\begin{tabular}{r}
\begin{tabular}{|c||c|c|c|c|c|c|c|c|}
\hline
($w,x,y,z$)
& 1111 & 1110 & 1101 & 1100 & 1011 & 1010 & 1001 & 1000 \\
\hline
$q_0(w,x,y,z)$
& 1 & 1 & 1 & 1 & 0 & 0 & 0 & 0 \\
\hline
\end{tabular}
\medskip\\
\begin{tabular}{|c|c|c|c|c|c|c|c|}
\hline
0111 & 0110 & 0101 & 0100 & 0011 & 0010 & 0001 & 0000 \\
\hline
0 & 1 & 0 & 0 & 0 & 0 & 0 & 0 \\
\hline
\end{tabular}
\end{tabular}
\end{center}
\end{table}
We assume the periodic boundary condition for space sites with a period $L$. 
From the evolution equation above, it is easily shown that 
\begin{equation}
\label{law}
\sum_{j=1}^{L}u_j^{n+1}=\sum_{j=1}^{L}u_j^n, \ \ \ \sum_{j=1}^{L}{u_{j-1}^{n+1}u_{j}^{n+1}\left(1-u_{j+1}^{n+1}\right)}=\sum_{j=1}^{L}{u_{j-1}^{n}u_{j}^{n}\left(1-u_{j+1}^{n}\right)}.
\end{equation}
We use the notation 
\[
\# a_1 a_2\cdots a_k(x)
\]
to denote the number of occurrences of the local pattern $a_1 a_2\cdots a_k$ in a configuration $x=(x_j)_{j=1}^L\in \{0,1\}^L$ among the $L$ sites.
The density of such patterns in $x$ is denoted by $\rho_{a_1 a_2\cdots a_k}(x)$, i.e., 
\[
\rho_{a_1 a_2\cdots a_k}(x) = \frac{\# a_1 a_2\cdots a_k(x)}{ L}.
\]
Then, the relations in (\ref{law}) show that $\rho_1$ and $\rho_{110}$ are conserved quantities:
\[
\rho_1(u^n)=\rho_1(u^0), \quad \rho_{110}(u^n)=\rho_{110}(u^0)
\]
for every $n\ge 0$, where $u^n=(u^n_j)_{j=1}^L$.
We simply write $\rho_1$, $\rho_{110}$ for $\rho_1(u^0)$, $\rho_{110}(u^0)$, respectively, if it makes no confusion.

Table~\ref{flux} corresponds to the following particle motion rules:
\begin{itemize}
\item
An isolated particle (010) remains unchanged.
\item
For a pair of adjacent two particles (0110), both particles move.
\item
For a local pattern of more than two particles ($011\ldots10$), all particles except the leftmost one move.
\end{itemize}
Figure~\ref{te1} shows an example of the dynamics of this system. 
\begin{figure}[h]
\centering
  \begin{overpic}[width=70mm,trim= -20pt -20pt -20pt -20pt, clip=false]{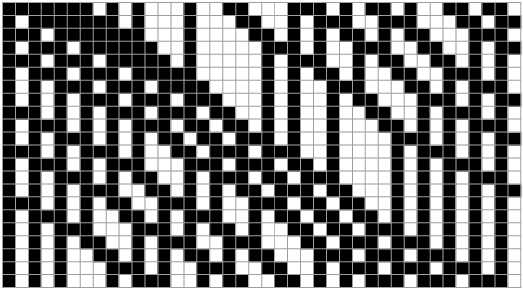} 
    \put(2, 59){\vector(1, 0){40}} 
    \put(47, 58){\small $j$}
    \put(2, 59){\vector(0, -1){40}} 
    \put(1, 12){\small $n$}
  \end{overpic}
\caption{Example of time evolution of (\ref{dsystem}). Black squares $\blacksquare$ mean $u=1$ and white squares $\square$ $u=0$.}
\label{te1}
\end{figure}

The mean flux across spatial sites in the asymptotic regime, representing the average particle momentum in the long-time limit, is defined by
\begin{equation}
Q_0=\lim_{n\to \infty}{\frac{1}{L}\sum_{j=1}^{L}q_0\left(u_{j-2}^n,u_{j-1}^n,u_j^n,u_{j+1}^n\right)}.
\end{equation}
The limit is known to exist 
and to depend uniquely on the pair of the conserved densities $\rho_1$ and $\rho_{110}$ as follows \cite{3d}:
\begin{equation}
Q_0=\max{\left(2\rho_1-1, 2\rho_{110} \right)}.
\label{qd}
\end{equation}
Figure~\ref{fdd} depicts the three-dimensional `fundamental diagram' obtained by (\ref{qd}).
The domain $(\rho_1, \rho_{110})$ is constrained by $2\rho_{110} \le \rho_1 \le 1-\rho_{110}$, reflecting the relationship between $\# 1$ and $\# 110$.
Typically, the fundamental diagram is described by the relationship between mean flux and density.
However, since the mean flux of this system depends on two independent quantities $\rho_1$ and $\rho_{110}$, the diagram naturally extends to three dimensions.
\begin{figure}[h]
  \centering
  \begin{overpic}[width=80mm,
   trim= -30pt -30pt -30pt -30pt, clip=false]{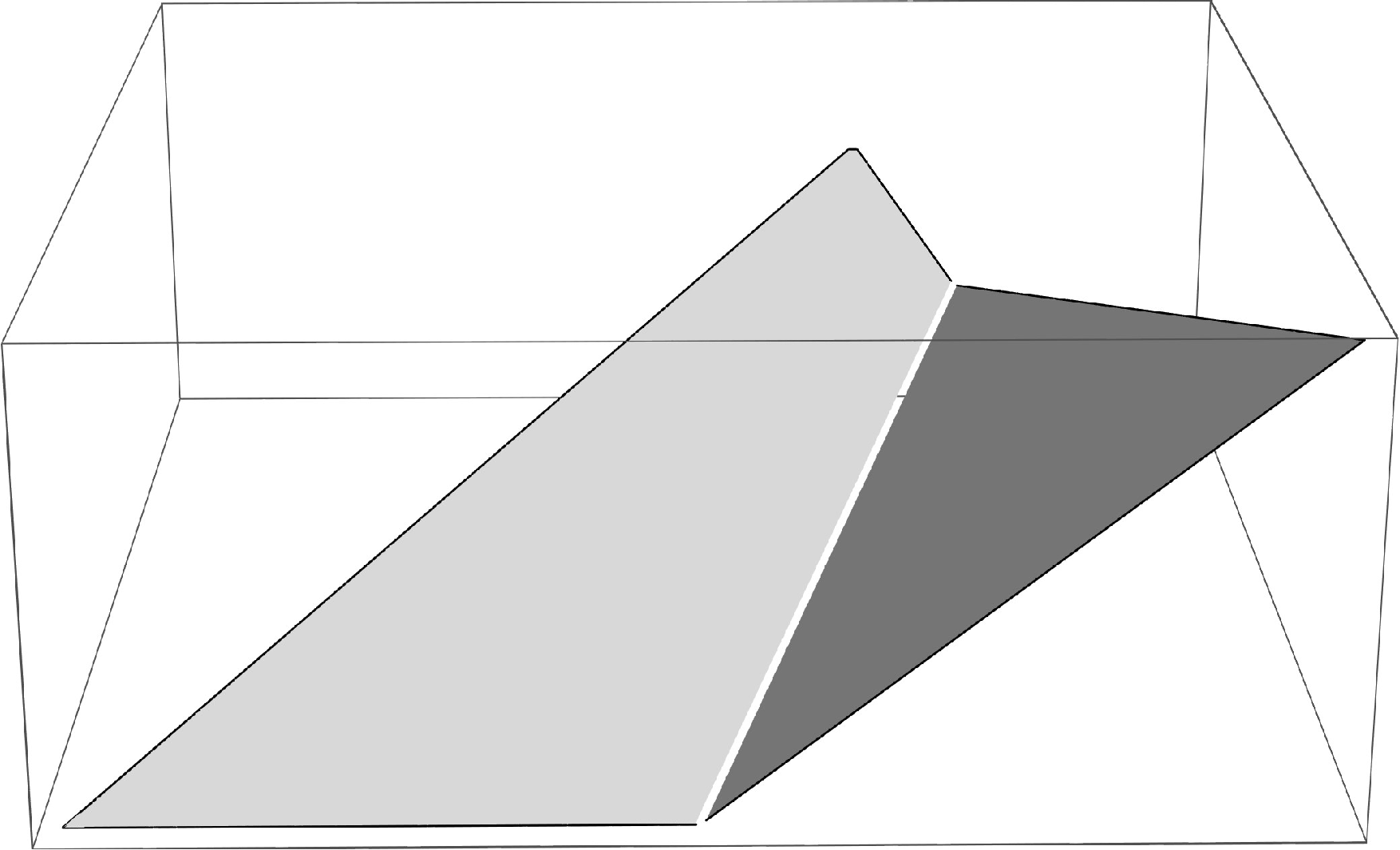}
    \put(47,54){\scriptsize $(2/3,1/3,2/3)$}
    \put(66,43){\scriptsize $(3/4,1/4,1/2)$}
    \put(0,36){\footnotesize $1$}
    \put(0.5,0){\footnotesize $0$}
    \put(95,0){\footnotesize $1$}
    \put(9,60){\scriptsize $\frac{1}{3}$}
    \put(0,48){\footnotesize $\rho_{110}$}
    \put(50,0){\footnotesize $\rho_1$}
    \put(-2,18){\footnotesize $Q_0$}
  \end{overpic}
\caption{Fundamental diagram of (\ref{dsystem}).}
\label{fdd}
\end{figure}

In the present paper, we introduce a stochastic parameter to the above deterministic system and analyze the asymptotic distribution of the stochastic system.
Based on the analysis, the expected value of the mean flux shall be rigorously derived.
Moreover, we confirm that the theoretical formula (\ref{qd}) in the deterministic case can be obtained as the zero-noise limit.

{\bf Structure of the paper.} In Section \ref{2}, we propose a stochastic extension of the cellular automaton (\ref{flux}) and express its mean flux in terms of two local-pattern densities.
After preparation in Section \ref{3} through examples, in Section \ref{s:m}, we give the stationary distribution of the stochastic system, which is derived by eigenvalue problems of transition matrices \cite{endo}. Furthermore, we take the noise-zero limit to obtain the deterministic profile of the fundamental diagram shown in Figure~\ref{fdd}.
Finally, in Section \ref{s:pf}, we provide the proof of the main theorem.

\section{Stochastic 5-neighbor cellular automaton} \label{2}
We introduce an external random variable $a$ into Table~\ref{flux} while preserving the conservation laws of (\ref{law}) for $\rho_1$ and $\rho_{110}$, thereby obtaining a stochastic cellular automaton given by 
\begin{equation}
\label{eqs}
u_j^{n+1}=u_j^n+q_1\left(u_{j-2}^n,u_{j-1}^n,u_j^n,u_{j+1}^n\right)-q_1\left(u_{j-1}^n,u_j^n,u_{j+1}^n,u_{j+2}^n\right)
\end{equation}
and Table~\ref{fluxs}. 
\begin{table}[h]
\caption{Rule table of $q_1(w,x,y,z)$ of (\ref{eqs}).}
\label{fluxs}
\begin{center}
\begin{tabular}{r}
\begin{tabular}{|c||c|c|c|c|c|c|c|c|}
\hline
($w,x,y,z$)
& 1111 & 1110 & 1101 & 1100 & 1011 & 1010 & 1001 & 1000 \\
\hline
$q_1(w,x,y,z)$
& 1 & 1 & 1 & 1 & 0 & 0 & 0 & 0 \\
\hline
\end{tabular}
\medskip\\
\begin{tabular}{|c|c|c|c|c|c|c|c|}
\hline
0111 & 0110 & 0101 & 0100 & 0011 & 0010 & 0001 & 0000 \\
\hline
$a$ & 1 & 0 & 0 & 0 & 0 & 0 & 0 \\
\hline
\end{tabular}
\end{tabular}
\end{center}
\end{table}

\noindent
The variable $a$ is a random variable defined by
\begin{equation}
a=
\begin{cases}
0 & \text{(with probability $\alpha$)} \\
1 & \text{(with probability $1-\alpha$).}
\end{cases}
\end{equation}

Figure~\ref{ste} shows an example of time evolution, and Figure~\ref{sim} shows numerical results of the mean flux $Q_u$ of the stochastic system.
Compared to Figure~\ref{fdd}, the mean flux surface of the stochastic system converges to that of the deterministic system (\ref{flux}) as $\alpha \to 1$. Figure~\ref{ss} shows cross-sectional plots at $\rho_{110}=7/60$ and $\alpha=0.5, \ 0.7, \ 0.9, \ 1$. 

\begin{figure}[h]
\centering
  \begin{overpic}[width=70mm,trim= -20pt -20pt -20pt -20pt, clip=false]{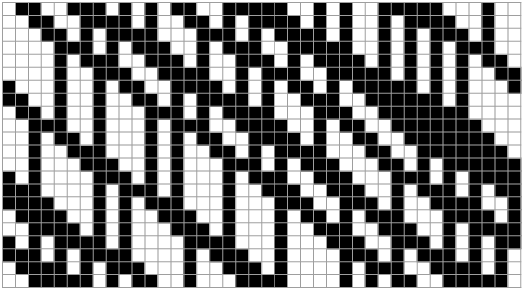} 
    \put(2, 59){\vector(1, 0){40}} 
    \put(47, 58){\small $j$}
    \put(2, 59){\vector(0, -1){40}} 
    \put(1, 12){\small $n$}
  \end{overpic}
\caption{Example of time evolution of (\ref{eqs}) for $\alpha=0.5$.}
\label{ste}
\end{figure}
\begin{figure}[h]
  \centering
  \begin{overpic}[width=80mm,
   trim= -10pt -10pt -10pt -10pt, clip=false]{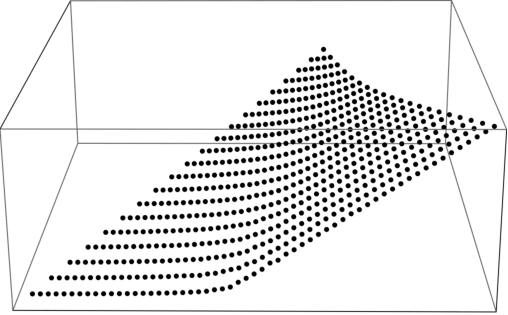}
    \put(0,36){\footnotesize $1$}
    \put(1,0.5){\footnotesize $0$}
    \put(95,0){\footnotesize $1$}
    \put(9,60){\scriptsize $\frac{1}{3}$}
    \put(0,48){\footnotesize $\rho_{110}$}
    \put(50,0){\footnotesize $\rho_1$}
    \put(-2,18){\footnotesize $Q_u$}
  \end{overpic}
\caption{Numerical results of fundamental diagram of (\ref{eqs}) for $L=60, \ \alpha=0.7$ averaged from $n=0$ to 3000.}
\label{sim}
\end{figure}
\begin{figure}[h]
\centering
  \begin{overpic}[width=90mm,trim= -15pt -15pt -15pt -15pt, clip=false]{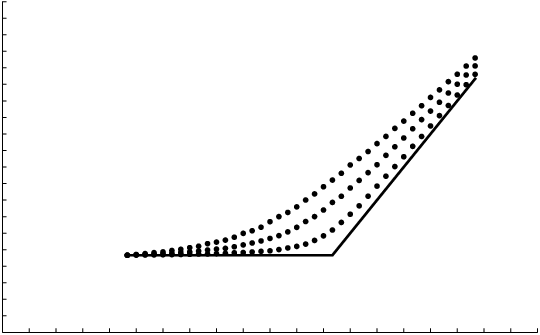} 
    \put(3,64){$Q_u$}

    \put(97,5){$\rho_1$}

    \put(3,0){\scriptsize 0.0}
    \put(21,0){\scriptsize 0.2}
    \put(39,0){\scriptsize 0.4}
    \put(57,0){\scriptsize 0.6}
    \put(75,0){\scriptsize 0.8}
    \put(93,0){\scriptsize 1.0}

    \put(-1, 4){\scriptsize 0.0}
    \put(-1,15){\scriptsize 0.2}
    \put(-1,26){\scriptsize 0.4}
    \put(-1,37){\scriptsize 0.6}
    \put(-1,48){\scriptsize 0.8}
    \put(-1,59){\scriptsize 1.0}
  \end{overpic}
\caption{Example of fundamental diagram of (\ref{eqs}) for $L=60, \ \rho_{110}=7/60, \\
\ 7/30 \le \rho_1 \le 53/60$. Dots are numerical results for $\alpha=0.5, \ 0.7, \ 0.9$ averaged from $n=0$ to 3000, and the curve is for $\alpha=1$, which is obtained by (\ref{qd}).}
\label{ss}
\end{figure}

In order to analyze this stochastic system, we express the time evolution using a new variable $v_j^n$ given by 
\begin{equation}\label{eqs2}
v_j^n=u_{n-j}^n.
\end{equation}
Then, the time evolution equation takes the form\footnote{
Note that, by \eqref{eqs} and \eqref{eqs2},
\begin{align*}
v_j^{n+1} = u_{n+1-j}^{n+1} 
&= u_{n+1-j}^n+q_1\left(u_{n-1-j}^n,u_{n-j}^n,u_{n+1-j}^n,u_{n+2-j}^n\right)-q_1\left(u_{n-j}^n,u_{n+1-j}^n,u_{n+2-j}^n,u_{n+3-j}^n\right)\\
&=v_{j-1}^n+v_{j}^n-v_{j}^n+q_1\left(v_{j+1}^n,v_{j}^n,v_{j-1}^n,v_{j-2}^n\right)-q_1\left(v_{j}^n,v_{j-1}^n,v_{j-2}^n,v_{j-3}^n\right).
\end{align*}
On the other hand, it is straightforward to check that 
\[
q_1(z,y,x,w)-y=-q(w,x,y,z)
\]
for every $(w,x,y,z)$.
The desired equality \eqref{eqsg} immediately follows from these observations.
}
\begin{equation}
\label{eqsg}
v_j^{n+1}=v_j^n+q\left(v_{j-3}^n,v_{j-2}^n,v_{j-1}^n,v_{j}^n\right)-q\left(v_{j-2}^n,v_{j-1}^n,v_{j}^n,v_{j+1}^n\right),
\end{equation}
and $q(w,x,y,z)$ is given by Table~\ref{fluxsg}.
\begin{table}[h]
\caption{Rule table of $q(w,x,y,z)$ of (\ref{eqsg})}
\label{fluxsg}
\begin{center}
\begin{tabular}{r}
\begin{tabular}{|c||c|c|c|c|c|c|c|c|}
\hline
($w,x,y,z$)
& 1111 & 1110 & 1101 & 1100 & 1011 & 1010 & 1001 & 1000 \\
\hline
$q(w,x,y,z)$
& 0 & $b$ & 0 & 0 & 0 & 1 & 0 & 0 \\
\hline
\end{tabular}
\medskip\\
\begin{tabular}{|c|c|c|c|c|c|c|c|}
\hline
0111 & 0110 & 0101 & 0100 & 0011 & 0010 & 0001 & 0000 \\
\hline
0 & 0 & 0 & 0 & 0 & 1 & 0 & 0 \\
\hline
\end{tabular}
\end{tabular}
\end{center}
\end{table}

\noindent The random variable $b$ is defined by
\begin{equation}
b=
\begin{cases}
1 & \text{(with probability $\alpha$)} \\
0 & \text{(with probability $1-\alpha$).}
\end{cases}
\end{equation}

The motion rule of this stochastic system is as follows.
\begin{itemize}
\item
An isolated particle (010) moves to its neighboring right empty site.
\item
For a pair of adjacent two particles (0110), both remain at their positions.
\item
For a pattern of more than two particles ($011\ldots10$), the rightmost particle moves to its neighboring right empty site with probability $\alpha$, while the other particles remain at their positions.
\end{itemize}
\begin{figure}[h]
\centering
  \begin{overpic}[width=70mm,trim= -20pt -20pt -20pt -20pt, clip=false]{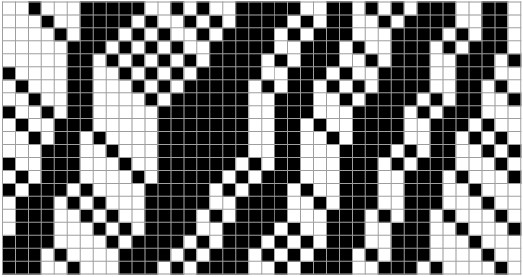} 
    \put(2, 59){\vector(1, 0){40}} 
    \put(47, 58){\small $j$}
    \put(2, 59){\vector(0, -1){40}} 
    \put(1, 12){\small $n$}
  \end{overpic}
\caption{Example of time evolution of (\ref{eqsg}) for $\alpha=0.5$.}
\label{steg}
\end{figure}
Figure~\ref{steg} shows an example of the time evolution of the transformed stochastic system.

The mean flux $Q$ of the stochastic system \eqref{eqsg} is given by
\begin{equation}
Q=\lim_{n\to \infty}{\frac{1}{L}\sum_{j=1}^{L}E[q\left(v_{j-2}^n,v_{j-1}^n,v_j^n,v_{j+1}^n\right)}],
\label{Q2}
\end{equation}
where $E[\cdot ]$ is the expectation according to the choice of $b$ in each step
(which may depend on the initial configuration $v^0$), if the limit exists.
Notice that, based on the particle motion rule above, the mean flux can be expressed in terms of the local densities $\rho_{1110}$ and $\rho_{010}$ as
\begin{equation}
Q=\lim_{n\to \infty}{E\left[\alpha\rho_{1110}(v^n)+\rho_{010}(v^n)\right]}.
\label{Q}
\end{equation}
Notice also that, using the inverse transformation $u_{n-j}^n=v_j^n$, it holds that \[
Q_u=\rho_1-Q
\]
for initial configurations $u^0$, $v^0$ satisfying $u_{-j}^0=v_j^0$.

\section{Definitions and Examples} \label{3}
Let us consider the stochastic process $(v^n)_{n\ge 0}$ governed by the update rule \eqref{eqsg} with Table~\ref{fluxsg}.
This process defines a Markov chain on the finite configuration space $\{0,1\}^L$, where a configuration here refers to a vector of length $L$ with binary entries. 
It is well known that any finite-state Markov chain can be decomposed into disjoint classes, called irreducible classes. 
To be more precise in our setting, let $f_n(x',x)$ denote the $n$-th transition probability from a configuration $x'$ to another configuration $x$ determined by Table~\ref{fluxsg} (so that the probability that $v^{m+n}=x$ when $v^m=x'$ is equal to $f_n(x',x)$ for any integer $n,m\ge 0$ and $x',x\in\{0,1\}^L$).
We simply write $f(x',x)$ for $f_1(x',x)$.
We say that $\Omega \subset \{0,1\}^L$ is  \emph{irreducible} if for any $x',x\in \Omega$, there is an $n\ge 0$ such that $f_n(x',x)>0$.
If an irreducible class $\Omega$ is closed, that is, if there exists no
pair $(x',x)$ with $x'\in\Omega$ and
$x\in\{0,1\}^L\setminus\Omega$ such that $f(x',x)>0$, then $\Omega$ is
called \emph{recurrent}; otherwise, it is called \emph{transient}.
Furthermore, a recurrent irreducible class $\Omega$ is called \emph{aperiodic} if the period $r(\Omega):=\gcd \{n:f_n(x,x)>0,\; x\in\Omega \}$ of $\Omega$ is $1$.
A probability vector $p(x)$ on $\{0,1\}^L$ is said to be \emph{stationary} if 
\begin{equation}  \label{eq:1110goal}
 \mathcal Ap(x)=p(x)  \quad \text{for any $x\in \{0,1\}^L$},
\end{equation}
where $\mathcal A\nu$ is a probability vector given by
\begin{equation} 
  \mathcal A\nu (x)=\sum _{x'\in \{0,1\}^L} f(x',x)\nu (x')
\end{equation}
for each probability vector $\nu$
(notice that if $\nu=P(v^m=\cdot)$, then $\mathcal A^n\nu = P(v^{n+m}=\cdot )$).
The following holds from the general theory of finite-state Markov chains (cf. Propositions 1.28, 1.29, Exercise 1.6 and Theorem 4.9 in~\cite{LP2017}).
\begin{prop}\label{prop:0617}
The configuration space $\{0,1\}^L$ can be decomposed into disjoint finitely many recurrent irreducible components $\Omega _1,\Omega _2,\ldots ,\Omega_k$ and finitely many transient irreducible components, such that
each recurrent irreducible component $\Omega_j$ admits a unique stationary distribution $p_j(x)$.
Furthermore, for each $j\in\{1,2,\ldots,k\}$, there exist probability vectors $h_{j,1},h_{j,2},\ldots ,h_{j,r(j)}$ with mutually disjoint support on $\Omega_j$ with $r(j)=r(\Omega_j)$, positive bounded linear functionals $\eta_{j,1},\eta_{j,2},\ldots ,\eta_{j,r(j)}$ on the space of all probability vectors of $\{0,1\}^L$ and constants $C>0$, $\kappa \in (0,1)$ satisfying that
\begin{enumerate} 
\item $\mathcal Ah_{j,i}=h_{j,i +1 \; \mathrm{mod} \; r(j)}$ for each $i\in\{1,2,\ldots,r(j)\}$, $j\in\{1,2,\ldots,k\}$;
\item $\frac{1}{r(j)}\sum_{i=1}^{r(j)}h_{j,i} =p_j(x)$  for each $j\in\{1,2,\ldots,k\}$;
\item For any probability vector $\nu$ on $\Omega$ and $n\ge 0$,
\[
\max_{x\in \Omega}\Big|\mathcal A^n \nu (x) - \sum_{j=1}^k\sum_{i=1}^{r(j)} \eta _{j,i}(\nu) h_{j,i+n\; \mathrm{mod}\; r(j)}(x)\Big\vert \le C\kappa ^n .
\]
\end{enumerate}
\end{prop}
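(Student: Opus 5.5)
This is the standard decomposition and convergence theory for a finite Markov chain. The plan is to assemble it from the cited results in \cite{LP2017}, adding one explicit argument for the uniform geometric rate with periodicity.

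\textbf{Step 1: decomposition.} Define communication $x\leftrightarrow x'$ by $f_n(x,x')>0$ and $f_m(x',x)>0$ for some $n,m\ge 0$. This is an equivalence relation, so $\{0,1\}^L$ splits into finitely many classes. The classes that are minimal in the induced partial order "$C\to C'$ if some transition leads from $C$ to $C'$" are exactly the closed ones, $\Omega_1,\dots,\Omega_k$; at least one exists since the state space is finite. The remaining classes are transient.

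\textbf{Step 2: stationary distributions.} Restricted to $\Omega_j$, the chain is irreducible and finite. Propositions 1.28 and 1.29 of \cite{LP2017} give a unique stationary $p_j$ supported on $\Omega_j$, with $p_j(x)=1/E_x[\tau_x^+]>0$.

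\textbf{Step 3: cyclic classes.} Put $r=r(\Omega_j)$ and fix $x_0\in\Omega_j$. Partition $\Omega_j$ into $D_1,\dots,D_r$ according to the residue mod $r$ of the lengths of paths from $x_0$; this is well defined by the definition of the period (Exercise 1.6). The chain maps $D_i$ into $D_{i+1 \bmod r}$. Define $h_{j,i}:=r\,p_j|_{D_i}$. Stationarity of $p_j$ together with the cyclic structure gives $\mathcal A(p_j|_{D_i})=p_j|_{D_{i+1}}$. This shows each $p_j(D_i)=1/r$, so the $h_{j,i}$ are probability vectors, and items 1 and 2 follow.

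\textbf{Step 4: convergence with a uniform rate (main obstacle).} The chain $\mathcal A^r$ restricted to $D_i$ is irreducible and aperiodic, with stationary measure $h_{j,i}$. Theorem 4.9 of \cite{LP2017} then gives $\max_x|\mathcal A^{rm}\delta_y(x)-h_{j,i}(x)|\le C\kappa^{rm}$ for all $y\in D_i$. Intermediate times $n=rm+s$ are handled by applying $\mathcal A^s$, which is a contraction in $\ell^1$.

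For a general $\nu$, let $\tau$ be the hitting time of $\bigcup_j\Omega_j$. From every state this set is reachable in at most $|\{0,1\}^L|$ steps, so $P_y(\tau>n)\le C\kappa^n$. Define
\[
\eta_{j,i}(\nu):=\sum_y\nu(y)\sum_{t\ge 0}\ \sum_{z\in D_{i'}}P_y(\tau=t,\ v^t=z),
\]
where $i'$ is chosen so that $i'+t\equiv i \pmod{r}$. This is linear, positive and bounded by $1$.

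Now decompose $\mathcal A^n\nu$ according to whether $\tau\le n/2$ or $\tau>n/2$. On the event $\tau\le n/2$, the chain has at least $n/2$ remaining steps inside the recurrent class, so the estimate above applies with error at most $C\kappa^{n/2}$. The event $\tau>n/2$ has probability at most $C\kappa^{n/2}$, and the same bound holds for the tail of the $\eta$-sum. Renaming $\kappa^{1/2}$ as $\kappa$ yields item 3.

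The delicate point is the index bookkeeping. The cyclic label reached at time $t$ must be propagated to time $n$, so that the limiting term comes out as $h_{j,i+n \bmod r}$. This holds because the label advances by one with every step of the chain.
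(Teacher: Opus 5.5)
Your route is the same as the paper's. The paper gives no argument beyond citing Propositions 1.28, 1.29, Exercise 1.6 and Theorem 4.9 of \cite{LP2017}, and your Steps 1--3 are exactly those cited facts. Step 4 is a useful addition: none of the cited items directly gives item 3 for an initial $\nu$ that charges transient states, and your hitting-time argument with $\tau$ and the functionals $\eta_{j,i}$ supplies this.

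There is, however, a sign error at exactly the bookkeeping step you flagged as delicate.

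\begin{itemize}
\item \emph{Where the mass actually goes.} Suppose the mass first enters $\Omega_j$ at time $t$ at a state $z\in D_{i'}$. At time $n$ it lies in $D_{i'+(n-t)}$, so it contributes, up to an error $C\kappa^{n-t}$, the vector $h_{j,i'+n-t}$.
\item \emph{The correct condition.} For this contribution to be $h_{j,i+n}$ you need $i\equiv i'-t \pmod r$, i.e.\ $i'\equiv i+t$. You wrote $i'+t\equiv i$ instead.
\item \emph{Why it fails.} With your convention, item 3 places this mass at $h_{j,i'+t+n}$ instead of $h_{j,i'-t+n}$. This is wrong whenever $2t\not\equiv 0 \pmod r$.
\item \emph{Counterexample.} Take a transient state $y$ with $f(y,z)=1$, where $z$ lies on a deterministic $3$-cycle $D_1\to D_2\to D_3\to D_1$ of singletons, with $z\in D_1$. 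Then $\mathcal A^n\delta_y=h_{j,n}$. Your $\eta$ puts all weight on the index $i=2$, so it predicts $h_{j,n+2}$.
\item \emph{Why it is easy to miss.} The two conventions agree when $r\le 2$, and also when $\nu$ is supported on recurrent states (where $\tau=0$).
\end{itemize}

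Replacing your condition by $i'\equiv i+t \pmod r$ repairs the argument, and the rest of Step 4 then goes through as written:
\begin{itemize}
\item the error coming from the event $\{\tau\le n/2\}$ is at most $C\kappa^{n/2}$;
\item the mass on $\{\tau>n/2\}$ and the tail $t>n/2$ of the $\eta$-sum are each bounded by $P_\nu(\tau>n/2)\le C\kappa^{n/2}$;
\item $\sum_{j,i}\eta_{j,i}(\nu)=P_\nu(\tau<\infty)=1$.
\end{itemize}
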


Notice that in the deterministic case $\alpha=0$, the transient components of Proposition \ref{prop:0617} may not be empty (for example, when $x=(111100)$, $f(x',x)=0$ for any $x'\in\{0,1\}^L$ and any irreducible set including $x$ is transient).
In contrast to it, one can decompose $\{ 0,1\} ^L$ into recurrent irreducible components for the stochastic 5-neighbor cellular automaton of \eqref{eqsg} as follows.
\begin{lem}
\label{lem:0630}
For any $x\in\{0,1\}^L$, the set $B(x):=\{x'\in \{0,1\}^L : f(x',x)>0\}$ is nonempty.
Moreover, 
    $B(x)=\{x\}$
if and only if either $\# 1(x)\in\{0,L\}$ or $\# 010(x)=\# 111(x) =0$.
In particular, every irreducible component $\Omega$ of the Markov chain defined by \eqref{eqsg} with Table~\ref{fluxsg} is recurrent.
\end{lem}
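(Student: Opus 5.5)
The plan is to construct preimages explicitly, working in the moving frame of \eqref{eqsg}. Recall the rule read off from Table~\ref{fluxsg}: a particle at site $y$ with an empty right neighbour $z$ jumps to $z$ if it is isolated ($x=0$), or, with probability $\alpha$ (the case $b=1$), if it is the right end of a block of length $\ge 3$. Otherwise nothing moves. I will assume $b$ is drawn independently at each such site, as the motion rules suggest. The key observations are that particles move at most one step to the right, and that in a block only the rightmost particle can move. So $f(x',x)>0$ holds exactly when $x$ arises from $x'$ by moving every isolated particle of $x'$ one step right, together with some subset of the right ends of blocks of length $\ge3$.

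\emph{Nonemptiness and the ``if'' part.} If $\#1(x)\in\{0,L\}$, nothing can move, so $B(x)=\{x\}$. If $\#010(x)=0$, then $x$ itself has no isolated particle, and choosing $b=0$ everywhere gives $f(x,x)>0$. If $\#010(x)=\#111(x)=0$, every block of $x$ has length exactly $2$. Any preimage $x'$ would have to produce some motion, and the right end of a block that moved cannot later sit inside a length-2 block preceded by a $0$ unless its predecessor was isolated. An isolated predecessor would, however, itself have had to move. Arguing on the leftmost moved particle in the cyclic order therefore gives a contradiction, and $B(x)=\{x\}$.

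For general $x$ with $\#010(x)>0$, I define $x'$ by shifting every isolated particle of $x$ one site to the left, which is possible since its left neighbour in $x$ is $0$. I then check block by block that a suitable choice of $b$ maps $x'$ to $x$. Let the shifted particle have new position $j-1$, and consider the site $j-2$.
\begin{itemize}
\item If $x_{j-2}=0$, or if $x_{j-2}$ was itself an isolated particle (and so was shifted), the particle at $j-1$ is isolated in $x'$ and moves deterministically.
\item If $x_{j-2}$ is the right end of a block of length $m\ge2$ in $x$, then $x'$ has a block of length $m+1\ge3$ ending at $j-1$, with $x'_j=0$. Choosing $b=1$ there moves exactly that particle back to $j$.
\end{itemize}
All other blocks of length $\ge3$ get $b=0$, and blocks of length $2$ are frozen. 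This shows $B(x)\neq\emptyset$ for all $x$, and $B(x)\neq\{x\}$ whenever $\#010(x)>0$.

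\emph{The remaining ``only if'' case: $\#010(x)=0$, $\#111(x)>0$, and $0<\#1(x)<L$.} Take a block $0\,1^k\,0$ in $x$ with $k\ge3$, whose leftmost particle is at $j_0$. Build $x'$ by moving that leftmost particle to $j_0-1$, so that $x'_{j_0}=0$ and the remaining block has length $k-1\ge2$. That remainder stays put (automatically if $k-1=2$, by $b=0$ otherwise). Now split on $x_{j_0-2}$.
\begin{itemize}
\item If $x_{j_0-2}=0$, the moved particle is isolated in $x'$ and returns to $j_0$.
\item If $x_{j_0-2}=1$, that site is the end of a block of length $m\ge2$ (there is no $010$ in $x$), so $x'$ has a block of length $m+1\ge3$ ending at $j_0-1$. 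Taking $b=1$ there sends the particle back to $j_0$.
\end{itemize}
This yields $x'\ne x$ with $f(x',x)>0$. The main obstacle I expect is the careful bookkeeping of neighbouring blocks under periodic boundary conditions: shifts must not collide or merge blocks unexpectedly, for instance when $L$ is small or when all blocks are adjacent through single zeros. I would handle this by always shifting into a site known to be $0$ in $x$, and by checking the length of the merged block in $x'$ case by case.

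\emph{Recurrence.} Since $B(x)\neq\emptyset$ for every $x$, the map $\mathcal A$ has no configuration with zero inflow. For a finite chain, a transient class $T$ with $\mathcal A$ mass-preserving would need its total inflow to be strictly less than its outflow. To obtain recurrence I would instead use that transitions preserve $\#1$ and $\#110$. I would then show reversibility of reachability: whenever $f(x',x)>0$, $x'$ is reachable from $x$. For this I would repeat the preimage constructions above, observing that a configuration can be driven back by cycling particles around the ring, since isolated particles and long-block ends circulate. This is the step I would verify last and am least sure of; if it fails in general, I would instead exhibit, for each configuration, a path back to itself through its preimages.
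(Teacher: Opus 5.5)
\textbf{Where the proposal matches the paper.} Your treatment of nonemptiness and of the characterization of $B(x)=\{x\}$ is the paper's argument. You shift every isolated particle one site to the left, or, when $\#010(x)=0$, $\#111(x)>0$ and $\#1(x)<L$, you move the leftmost particle of a block of length at least $3$ one site to the left. You then choose $b$ so that the result returns to $x$.

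Your backward argument for the ``if'' direction is more to the point than the paper's remark that $0110$ persists forward. It is also purely local. A particle that moved from $j-1$ to $j$ forces $x_{j-1}=0$. Its partner at $j+1$ must already have been at $j+1$ in $x'$, with $x'_j=0$. That partner is then either isolated in $x'$, so it moves, or has $x'_{j+2}=1$, which forces $x_{j+2}=1$. Either way you get a contradiction, so no ``leftmost moved particle'' is needed.

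Your periodic-boundary worry is harmless. In the one-zero case $L=k+1$, the moved particle joins the remainder of its own block, so ``$m+1$'' is not the right length. The merged block still has length $k\ge 3$ with the moved particle as its right end, and $b=1$ there works.

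\textbf{The gap: recurrence.} The genuine gap is the final assertion, recurrence. You are right that $B(x)\neq\emptyset$ for all $x$ does not suffice. Take two states with $P(a,a)=P(a,b)=\tfrac12$ and $P(b,b)=1$: both predecessor sets are nonempty, yet $\{a\}$ is transient. The paper states recurrence as an ``in particular'' without a separate argument, so it does not close this either.

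Your proposal does not close it. Showing that $f(x',x)>0$ implies $x'$ is reachable from $x$ is the right criterion. However, ``cycling particles around the ring'' is not an argument: the real content is that free particles can be rearranged arbitrarily within a fixed skeleton, and you do not show this. Your fallback, a path from each configuration back to itself, proves nothing. Lying on a cycle does not prevent leaking out of a class, as the self-loop at $a$ shows.

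\textbf{A clean way to finish, for $0<\alpha<1$.} The predecessor analysis of Section~\ref{s:pf} never uses $\Omega$; it describes all $x'\in\{0,1\}^L$ with $f(x',x)>0$. Hence the computation \eqref{eq:proof_of_goal} shows that
\[
w(x)=\alpha^{\#010(x)}(1-\alpha)^{-\#1110(x)-\#010(x)}
\]
satisfies $\sum_{x'\in\{0,1\}^L} f(x',x)\,w(x')=w(x)$ for every $x$. Normalizing $w$ gives a stationary distribution with full support. Any stationary distribution of a finite chain vanishes on transient states, so every state is recurrent and every irreducible class is closed.
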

\begin{proof}
The ``if'' part
is straightforward since the local pattern `0110' persists with probability $1$.
Thus, we shall prove the converse.

Assume that $\# 1(x)\in[1,L-1]$, and $\# 010(x)\ge 1$ or $\# 111(x) \ge 1$.
In the case $\# 010(x)\ge 1$, take $x'$ such that $x_j'x_{j+1}'=10$ whenever $x_jx_{j+1}x_{j+2}=010$ (the subindices are modulo $L$), and $x_i=x_i'$ for other $i$.
Then, $f(x',x)>0$ because 
\begin{itemize}
    \item if $x_{j-1}x_jx_{j+1}x_{j+2}=0010$, then $x_{j-1}'x_j'x_{j+1}'x_{j+2}'=010*$ ($x_{j+2}'$ is determined according to whether $x_{j+2}x_{j+3}x_{j+4}=010$), 
    and the local transition probability from $x_{j-1}'x_j'x_{j+1}'=010$ to $x_{j}x_{j+1}x_{j+2}=010$ is $1$,
    \item if $x_{j-2}x_{j-1}x_jx_{j+1}x_{j+2}=11010$, then $x_{j-2}'x_{j-1}'x_j'x_{j+1}'x_{j+2}'=1110*$, whose local transition probability is $\alpha >0$, and
    \item  if $x_{j-2}x_{j-1}x_jx_{j+1}x_{j+2}=01010$, then $x_{j-2}'x_{j-1}'x_j'x_{j+1}'x_{j+2}'=1010*$, where the local transition probability 
    from $x_{j-1}'x_j'x_{j+1}'=010$
    to $x_{j}x_{j+1}x_{j+2}=010$ is $1$, and the calculation of the local transition probability associated with $x_{j-2}x_{j-1}x_{j}=010$ reduces to the analysis of
    the local pattern `010' in $x_{j-2}x_{j-1}x_j$
\end{itemize}
(refer also to Section \ref{s:pf}, where more inductive constructions and proofs are given).

In the case $\# 010(x)= 0$ and $\# 111(x) \ge 1$, take $x'$ such that $x_j'x_{j+1}'x_{j+2}'x_{j+3}'=1011$ for an integer $j$ satisfying $x_jx_{j+1}x_{j+2}x_{j+3}=0111$, and $x_i=x_i'$ for other $i$.
Then, $f(x',x)>0$ because 
\begin{itemize}
    \item if $x_{j-1}x_jx_{j+1}x_{j+2}x_{j+3}=00111$, then $x_{j-1}'x_j'x_{j+1}'x_{j+2}'x_{j+3}'=01011$, whose local transition probability is positive,
    \item if $x_{j-1}x_jx_{j+1}x_{j+2}x_{j+3}=10111$, then $x_{j-2}=1$ (since the local pattern `010' does not appear in this case), so $x_{j-2}'x_{j-1}'x_j'x_{j+1}'x_{j+2}'x_{j+3}'=111011$, whose local transition probability is positive.
\end{itemize}
This completes the proof.
\end{proof}

Furthermore, in our setting, every recurrent irreducible component is aperiodic under the identification by a shift as follows.
Denote by $S$ the left-shift operator on $\{0,1\}^L$, i.e.,~$S(x)_j=x_{j+1\; \mathrm{mod} \; L}$ for $x=(x_j)_{j=1}^L$.

\begin{lem}
\label{lem:aperiodicity}
Every recurrent irreducible component $\Omega$ of the Markov chain defined by \eqref{eqsg} with Table~\ref{fluxsg} is periodic (i.e.~$r(\Omega) \ge 2$) if and only if $\# 010 (x) \ge 1$ and $\#11 (x)=0$ for all $x \in \Omega$.
\end{lem}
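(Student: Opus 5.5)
The plan is to reduce everything to the conserved quantity $\#110$, and to handle the two directions separately: the "if" part via an explicit deterministic rotation, and the "only if" part by exhibiting a configuration of $\Omega$ with a positive self-loop probability. The first observation is that, for $x\ne(1,\dots,1)$, we have $\#11(x)\ge 1$ if and only if $\#110(x)\ge1$. Since $\#110$ is conserved by \eqref{eqsg}, within a recurrent irreducible component $\Omega$ either $\#11=0$ for every $x\in\Omega$ or for none. So the conditions in the statement may be checked on a single representative of $\Omega$.

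\emph{If part.} Assume $\#11(x)=0$ and $\#010(x)\ge1$. Then every particle is isolated, so its neighbourhood is of the form $010$. Reading Table~\ref{fluxsg} (the entries $q(0010)=q(1010)=1$ are the only ones that matter here), every particle moves one site to the right simultaneously, including particles in patterns $\cdots01010\cdots$. The random entry $b$ is never used, since $1110$ does not occur. Hence the dynamics on such $x$ is the deterministic map $x\mapsto S^{-1}x$, which again has $\#11=0$. Therefore $\Omega=\{S^{-k}x\}_k$ is a single shift orbit, traversed cyclically. Because $x$ is neither $(0,\dots,0)$ nor $(1,\dots,1)$, we have $S^{-1}x\neq x$, so $f(y,y)=0$ for every $y\in\Omega$, the orbit length is at least $2$, and $r(\Omega)\ge2$.

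\emph{Only if part.} Suppose the conditions fail. If $\#1\in\{0,L\}$, then $\Omega$ is a fixed point, so it is aperiodic. If $\#11=0$ and $\#010=0$, then $x$ is empty, which is the same case. It remains to treat $\#11\ge1$, i.e.\ $\#110\ge1$, so that $x$ has at least one block of length at least $2$. Here I plan to drive the chain, with positive probability, to a configuration $y\in\Omega$ with $\#010(y)=0$. The strategy is to choose $b=0$ at every step, so that blocks of length $\ge3$ never emit a particle and blocks of length $2$ are stationary anyway. Meanwhile every isolated particle moves right by one site per step; chains like $01010\cdots$ move rigidly, by the rule $q(1010)=1$. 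When an isolated particle reaches distance $2$ from a block, the move $01011\to00111$ attaches it to that block, after which it never leaves. Since some block exists and the lattice is periodic, after at most $L$ steps every isolated particle has been absorbed. This yields $y$ with $\#010(y)=0$, and $y\in\Omega$ because $\Omega$ is closed.

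To finish, consider the two possibilities for $y$. If $\#111(y)=0$, all blocks of $y$ have length exactly $2$, so $y$ is fixed by Lemma~\ref{lem:0630}'s "if" part and $f(y,y)=1$. Otherwise $f(y,y)=(1-\alpha)^{m}>0$, where $m$ is the number of blocks of length $\ge3$. In either case $r(\Omega)=1$.

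I expect the main obstacle to be the absorption step: checking carefully against Table~\ref{fluxsg} that, with $b=0$, no configuration arising along the way ever splits an existing block, and that trains of isolated particles at mutual distance $2$ neither stall nor pass through one another. This includes the wrap-around at the periodic boundary. I would verify it by a case analysis on the local window $x_{j-3}\cdots x_{j+1}$ around the front of each isolated particle.
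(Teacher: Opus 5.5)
Your proposal is correct and follows essentially the same route as the paper. The ``if'' direction uses the deterministic unit shift of isolated particles. The ``only if'' direction lets every isolated particle advance while all blocks stay put, i.e.\ $b=0$ at each occurrence of $1110$; like the paper's argument, this tacitly needs $\alpha<1$. This continues until a configuration with $\#010=0$ and $\#11\ge1$ is reached, and that configuration has a positive self-loop probability. Your preliminary observation that $\#11\ge1\iff\#110\ge1$, with $\#110$ conserved, and your explicit absorption bound of $L$ steps only make the paper's brief argument more precise.
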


\begin{proof}
Notice that if $x\in\Omega$ satisfies $\# 010 (x) \ge 1$ and $\#11 (x)= 0$, then $f(x,S(x))=1$ and $x\neq S(x)$.
Thus, the ``if'' part holds.

Conversely, assume that $\# 010 (x) = 0$ or $\#11 (x)\ge 1$ for some $x\in\Omega$.
If $\#1(x)=0$ or $\#1(x)=L$, then obviously $f(x,x)=1$ and $\Omega$ is aperiodic, so we further assume that $\#1(x)\in [1,L-1]$.
Then, there is $\hat x\in\Omega$ such that $\# 010 (\hat x) = 0$ and $\#11 (\hat x)\ge 1$.
In fact, in the case $\# 010 (x) =0$, automatically $\#11(x)\ge 1$, so the claim holds with $\hat x=x$.
In the other case $\# 010(x)\ge 1$, due to the transition rule of Table \ref{fluxsg}, any `1' in the local pattern `010' of $x$ is shifted with probability $1$ under the stochastic process \eqref{eqsg}, but any `1' in the local pattern `11' of $x$ can remain unchanged with a positive probability
(more precisely, there is $\tilde x$ with $f(x,\tilde x)>0$ such that if $x_jx_{j+1}x_{j+2} =010$ then $\tilde x_{j+1}\tilde x_{j+2} =01$, and $x_jx_{j+1} =11$ then $\tilde x_{j}\tilde x_{j+1} =11$).
Hence, by repeating this transition in each step, we obtain the desired $\hat x$ in finite steps.
Therefore, for such a $\hat x$, it follows from the transition rule of Table \ref{fluxsg} that $f(\hat x,\hat x)>0$, so that $r(\Omega )=1$. This completes the proof.
\end{proof}

Consequently, the mean flux \eqref{Q} can be understood via the information of the stationary measures $p_j(x)$ on each irreducible component, see Corollary \ref{cor:0617}.
To investigate the dependency of the stationary distribution $p_j(x)$ on local pattern densities $\rho _{a_1\cdots a_l}$, we now consider a collection of illustrative examples.
In this section, we identify configurations that differ by some iteration of the left-shift operations, in order to simplify the representation of $\Omega$.

We first observe that an irreducible component is not determined solely by the system size $L$ and the two conserved quantities $\# 1$ and $\# 110$.
That is, even for fixed values of $L$, $\# 1$ and $\# 110$, there may exist configurations that cannot be transformed into one another through the dynamics.
For instance, in the case $L=10, \# 1=6, \# 110=2$ there are two irreducible sets $\Omega_1$ and $\Omega_2$ given by
\begin{equation}
\Omega_1=
\begin{array}{l}
\{0001101111,\ 0001110111,\ 0001111011,\ 0010110111,\\
\ 0010111011,\ 0011011101,\ 0011101101,\ 0101011011\},
\end{array}
\label{omega_1}
\end{equation}
\begin{equation}
\Omega_2=
\begin{array}{l}
\{0011001111,\ 0011010111,\ 0011100111,\ 0011101011,\ 0101101011\}.
\end{array}
\label{omega_2}
\end{equation}
The transition matrix of $\Omega_1$ is
\begin{equation*}
\begin{array}{c}
0001101111\\
0001110111\\
0001111011\\
0010110111\\
0010111011\\
0011011101\\
0011101101\\
0101011011
\end{array}
\left(
\begin{array}{cccccccc}
1-\alpha & 0 & 0 & 0 & 0 & \alpha & 0 & 0 \\
(1-\alpha)\alpha & (1-\alpha)^2 & 0 & 0 & 0 & \alpha^2 & (1-\alpha)\alpha & 0 \\
0 & \alpha & 1-\alpha & 0 & 0 & 0 & 0 & 0 \\
0 & 1-\alpha & 0 & 0 & 0 & 0 & \alpha & 0 \\
0 & \alpha & 1-\alpha & 0 & 0 & 0 & 0 & 0 \\
0 & 0 & 0 & 1-\alpha & 0 & 0 & 0 & \alpha \\
0 & 0 & 0 & \alpha & 1-\alpha & 0 & 0 & 0 \\
0 & 0 & 0 & 0 & 1 & 0 & 0 & 0 \\
\end{array}
\right),
\end{equation*}
where each component of the transition matrix ($a_{i,j}$) is a transition probability from the $i$-th configuration to the $j$-th configuration which is determined by the local rule of Table~\ref{fluxsg}.
The eigenvector of the matrix for eigenvalue 1 is
\begin{equation*}
\Big(\frac{1-\alpha}{\alpha^2},\ \frac{1}{\alpha^2},\frac{1-\alpha}{\alpha^2},\ \frac{1}{\alpha},\ \frac{1}{\alpha},\ \frac{1}{\alpha},\ \frac{1}{\alpha},\ 1\Big).
\end{equation*}
The transition matrix of $\Omega_2$ is
\begin{equation*}
\begin{array}{c}
0011001111\\
0011010111\\
0011100111\\
0011101011\\
0101101011\\
\end{array}
\left(
\begin{array}{ccccc}
1-\alpha & 0 & 0 & \alpha & 0 \\
1-\alpha & 0 & 0 & \alpha & 0 \\
0 & 2\alpha(1-\alpha) & (1-\alpha)^2 & 0 & \alpha^2 \\
0 & \alpha & 1-\alpha & 0 & 0 \\
0 & 0 & 1 & 0 & 0 \\
\end{array}
\right),
\end{equation*}
and its eigenvector of eigenvalue 1 is 
\begin{equation*}
\Big(\frac{2(1-\alpha)}{\alpha^2},\ \frac{2}{\alpha},\frac{1}{\alpha^2},\ \frac{2}{\alpha},\ 1\Big).
\end{equation*}

We now derive an example of an eigenvector for another case with $L=16, \# 1=11, \# 110=4$.
One of the corresponding sets of configurations is given by:
\begin{equation*}
\begin{array}{l}
\{0011011011011111,\ 0011011011101111,\ 0011011011110111,\\
\ 0011011011111011,\ 0011011101101111,\ 0011011101110111,\\
\ 0011011101111011,\ 0011011110110111,\ 0011011110111011,\\
\ 0011011111011011,\ 0011101101101111,\ 0011101101110111,\\
\ 0011101101111011,\ 0011101110110111,\ 0011101110111011,\\
\ 0011101111011011,\ 0011110110110111,\ 0011110110111011,\\
\ 0011110111011011,\ 0011111011011011,\ 0101101101101111,\\
\ 0101101101110111,\ 0101101101111011,\ 0101101110110111,\\
\ 0101101110111011,\ 0101101111011011,\ 0101110110110111,\\
\ 0101110110111011,\ 0101110111011011,\ 0101111011011011\}.
\end{array}
\end{equation*}
The eigenvector for eigenvalue 1 of the transition matrix is 
\[
\begin{aligned}
\Big(\frac{1-\alpha}{\alpha},\ \frac{1}{\alpha},\ \frac{1}{\alpha},\ \frac{1-\alpha}{\alpha},\ \frac{1}{\alpha},\ \frac{1}{(1-\alpha)\alpha},\ \frac{1}{\alpha},\ \frac{1}{\alpha},\ \frac{1}{\alpha},\ \frac{1-\alpha}{\alpha},\ \\
\frac{1}{\alpha},\ \frac{1}{(1-\alpha)\alpha},\ \frac{1}{\alpha},\ \frac{1}{(1-\alpha)\alpha},\ 
\frac{1}{(1-\alpha)\alpha},\ \frac{1}{\alpha},\ \frac{1}{\alpha},\ \frac{1}{\alpha},\ \frac{1}{\alpha},\ \frac{1-\alpha}{\alpha},\ \\
1,\ \frac{1}{1-\alpha},\ 1,\ \frac{1}{1-\alpha},\ \frac{1}{1-\alpha},\ 1,\ \frac{1}{1-\alpha},\ \frac{1}{1-\alpha},\ \frac{1}{1-\alpha},\ 1\Big).
\end{aligned}
\]
From these examples, one may conjecture that the unique stationary distribution $p(x)$ on each irreducible component $\Omega$ takes the form
\begin{equation}
p\left(x\right)\propto \frac{\alpha^{\# 010\left(x\right)}}{\left(1-\alpha\right)^{\# 1110\left(x\right)+\# 010\left(x\right)}}.
\label{ratio}
\end{equation}
\section{Main results} \label{s:m}
Now we are ready to state our main theorem, which confirms the validity of the  conjecture in \eqref{ratio}.
\begin{thm}\label{s:m1}
Let $\Omega\subset \{0,1\}^L$ be an irreducible component of the Markov chain defined by \eqref{eqsg} with Table~\ref{fluxsg}, and 
let $p(x)=p_\Omega(x)$ be the unique stationary measure supported on $\Omega$.
Then, for every
$x\in\Omega$, it holds that
\begin{equation}
p\left(x\right)=\frac{\frac{\alpha^{\# 010\left(x\right)}}{\left(1-\alpha\right)^{\# 1110\left(x\right)+\# 010\left(x\right)}}}{\sum_{k_1,k_2}{\frac{\alpha^{k_2}}{\left(1-\alpha\right)^{k_1+k_2}}N_{\Omega}\left(k_1,k_2\right)}},
\label{dist}
\end{equation}
where $N_{\Omega}(k_1,k_2)$ is a partition function on $\Omega$, that is, the number of configurations $x\in \Omega$ satisfying $\# 1110(x)=k_1$ and $\# 010(x)=k_2$. 
\end{thm}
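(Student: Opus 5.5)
Since $\Omega$ is a recurrent irreducible component (Lemma~\ref{lem:0630}), Proposition~\ref{prop:0617} makes the stationary measure on $\Omega$ unique. It therefore suffices to show that the unnormalized weight
\[
w(x)=\frac{\alpha^{\#010(x)}}{(1-\alpha)^{\#1110(x)+\#010(x)}}
\]
satisfies the balance equation $\sum_{x'\in B(x)} f(x',x)\,w(x')=w(x)$ for every $x\in\Omega$. Normalizing over $\Omega$ and grouping configurations by $(\#1110,\#010)$ then gives the denominator $\sum_{k_1,k_2}\alpha^{k_2}(1-\alpha)^{-k_1-k_2}N_\Omega(k_1,k_2)$ in \eqref{dist}. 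Since $w$ depends only on local pattern counts, I will reduce the global balance equation to a product of local balance identities.

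\textbf{Decomposition into blocks.} I cut $x$ into blocks. By Table~\ref{fluxsg}, the only moving particles are isolated ones (pattern $010$), which move right with probability $1$, and the rightmost particle of a run of length $\ge 3$, which moves with probability $\alpha$. Runs of length two never move. Accordingly, I cut $x$ at every occurrence of the stable pattern $0110$ and at long stretches of $0$'s, so that each block evolves independently and the set $B(x)$ factorizes as a product over blocks. More precisely, for each particle of $x$ I determine its possible preimage positions: it came either from the same site or from the site to its left. This is exactly the inductive construction hinted at in the proof of Lemma~\ref{lem:0630}. The local choices are the following.
\begin{itemize}
\item A $1$ preceded by $0$ and followed by $0$ in $x$ either arrived from the left, in which case the preimage had a $1$ one site to the left, or was the rightmost particle of a long run that stayed, with probability $1-\alpha$.
\item A run ending in $\ldots 1110$ arose either because its rightmost particle stayed, with factor $1-\alpha$, or because a particle joined it from an isolated position.
\end{itemize}
Then $f(x',x)$ factorizes as a product of factors $\alpha$, $1-\alpha$ and $1$ attached to these local choices.

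\textbf{Telescoping the sum.} I organize the sum over $x'$ by a transfer-matrix (or cluster-by-cluster) recursion along maximal clusters of alternating patterns such as $(10)^m$ and $(110)^m$ segments followed by long runs, and check that within each cluster
\[
\sum_{\text{local choices}} f_{\mathrm{loc}}\,\frac{w(x')}{w(x)}\Big|_{\mathrm{loc}}=1 .
\]
The ratio $w(x')/w(x)$ changes by $\alpha/(1-\alpha)$ for each $010$ created and by $1/(1-\alpha)$ for each $1110$ created, so each local identity should reduce to an identity such as $(1-\alpha)\cdot\frac{1}{1-\alpha}\cdot(\ldots)+\alpha\cdot(\ldots)=1$. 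These can be checked against the worked examples for $\Omega_1$ and $\Omega_2$ from Section~\ref{3}.

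\textbf{Main obstacle.} The hardest step is bookkeeping. A single change in $x'$ can alter $\#010$ and $\#1110$ at overlapping positions, for example $11010$ versus $1110*$ in Lemma~\ref{lem:0630}. As a result, the local factors do not obviously multiply, and the recursion must track the state of the boundary between consecutive clusters. I would first handle this with an induction on the number of moving particles (equivalently, the number of $010$ and $1110$ patterns in $x$): remove one rightmost-cluster contribution, verify the two-term local identity, and use the induction hypothesis on the rest with a boundary state in $\{$stay, moved$\}$. The periodic boundary condition is closed up by choosing the cut at a $0110$ or $00$ when one exists. The all-isolated case, $\#11=0$, is trivial because the dynamics is then a deterministic shift under which $w$ is invariant.
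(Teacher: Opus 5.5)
\textbf{There is a genuine gap: the balance equation, which is the whole content of the theorem, is never proved.}

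Your reduction is sound. By Lemma~\ref{lem:0630} and Proposition~\ref{prop:0617} it suffices to check $\sum_{x'\in B(x)}f(x',x)w(x')=w(x)$, and normalizing then gives \eqref{dist}. But the local identities are only said to ``reduce to an identity such as'' a two-term relation, to be checked against the examples of Section~\ref{3}. The induction with a boundary state is announced but not carried out.

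What is missing is an explicit description of $B(x)$, together with the values of $\ell(x',x)$ and $\#010(x')$ on it. The paper obtains this by decomposing $x$ into the blocks $\{[(110)1][01]\cdots[01]\}$, $\ldbr(110)1\rdbr$ and $\la[01]\cdots[01]\ra$ (Lemmas~\ref{lemma:1101}--\ref{lemma:outside}). Outside these blocks $x'$ coincides with $x$. Each block has a unique predecessor unless it is immediately followed by $11$, in which case it has exactly two (ending in $10$ or in $01$). Hence $B(x)$ is parametrized by one independent binary choice per occurrence of $0111$ in $x$, i.e.\ at the \emph{left} end of each run of length at least $3$. Moreover, $\ell(x',x)+\#010(x')$ equals $\#010(x)$ plus the number of choices in which $x'$ has $10$ rather than $01$ (Lemma~\ref{lemma:numberB}). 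The sum is therefore $\bigl(\frac{\alpha}{1-\alpha}\bigr)^{\#010(x)}\bigl(1+\frac{\alpha}{1-\alpha}\bigr)^{\#0111(x)}$, which equals $w(x)$ because $\#0111=\#1110$.

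\textbf{The local analysis you do sketch is also wrong in ways that matter.}

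First, a $1$ with $x_{j-1}x_jx_{j+1}=010$ cannot be a particle that stayed. From $x_{j-1}x_j=01$ one gets $x'_{j-1}x'_j\in\{10,01\}$. The option $01$ is impossible: $x'_{j-1}x'_jx'_{j+1}=010$ would force $x_j=0$, and $011$ would force $x_{j+1}=1$ (Lemma~\ref{lemma:v^n}(iii),(v)). So this particle always came from $j-1$, with factor $1$ or $\alpha$ already determined by $x$ itself ($0010$ or $01010$ versus $11010$). No alternative arises at such a site.

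Second, at the right end $\ldots1110$ of a run in $x$, you pair the wrong alternatives. The alternative to ``the last particle stayed'' is ``the run in $x'$ was one site longer and its last particle left with probability $\alpha$''. This is possible only if $x$ continues with $1$. A particle joining from an isolated position concerns the left end of the run, which is where the genuine ambiguity lives.

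Third, $0110$ is not a valid cut for the backward problem. Since $01110\to01101$ with probability $\alpha$, a $0110$ in $x$ followed by $1$ need not come from $0110$. Whether it does depends on the sites further right, so $B(x)$ does not factorize over blocks cut at $0110$.

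The overlap problem you list as the main obstacle is exactly what the bracket decomposition resolves, by concentrating all ambiguity at the occurrences of $0111$ and showing these choices are independent. Without that, or an explicitly specified and verified transfer-matrix identity, the proposal does not prove \eqref{eq:1110goal2}.
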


Although the partition function $N_\Omega(k_1,k_2)$ itself depends on
the irreducible component $\Omega$, its dependence on $(k_1,k_2)$ is
governed by a universal combinatorial factor as follows.
Put
\[
R(k_2):=\#1-2\#110-k_2,
\]
and define
\begin{align}\label{eq:universal}
\mathcal{N}(k_1,k_2):=
\begin{cases}
    \binom{L-\#1-\#110}{k_2}
\binom{\#110}{k_1}\binom{R(k_2)-1}{k_1-1}\quad & (1\le k_1\le R(k_2)),\\
\binom{L-\#1-\#110}{k_2}\quad & (k_1=0,\; R(k_2)=0),\\
0\quad & (\text{otherwise})
\end{cases}
\end{align}
for nonnegative integers $k_1,k_2$.
Here and in what follows, we use the convention that
$\binom{a}{b}=0$ if $b<0$ or $b>a$.

\begin{prop}
\label{prop:component_independence}
Assume that
\[
0<\alpha<1,\qquad
L-\#1\geq1,\qquad
\#110\geq1,\qquad
\#1-2\#110\geq1.
\]
For every irreducible component $\Omega$, there exists a constant
$C_\Omega>0$, independent of $k_1$ and $k_2$, such that
\[
N_\Omega(k_1,k_2)=C_\Omega\mathcal{N}(k_1,k_2).
\]
\end{prop}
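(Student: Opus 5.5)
The plan is to find an explicit combinatorial description of each irreducible component $\Omega$. From it, $N_\Omega(k_1,k_2)$ should factor into a $(k_1,k_2)$-independent count of ``skeletons'' times the three binomial factors in \eqref{eq:universal}. I first decompose every $x$ with $0<\#1(x)<L$ into maximal blocks of consecutive $1$'s separated by $0$'s. Then $\#110(x)$ is the number of blocks of length $\ge 2$, $\#010(x)=k_2$ is the number of blocks of length $1$, and $\#1110(x)=k_1$ is the number of blocks of length $\ge 3$. The $1$'s in long blocks total $\#1-k_2$, and each long block carries at least two of them. The excess $R(k_2)=\#1-2\#110-k_2$ is therefore distributed among the $\#110$ long blocks, with exactly $k_1$ of them receiving a positive excess. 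This gives the factor $\binom{\#110}{k_1}\binom{R(k_2)-1}{k_1-1}$ for $k_1\ge1$, and the degenerate value $1$ when $k_1=0$, $R(k_2)=0$.

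For the remaining factor, I attach to every long block its terminating $0$ (the pattern $110$). This uses up $\#110$ zeros and leaves $L-\#1-\#110$ ``free'' zeros. I then encode each isolated particle as a pair $01$ glued to a free zero, so that placing $k_2$ isolated particles amounts to choosing $k_2$ of the $L-\#1-\#110$ free zeros. Removing these decorations (isolated particles, excesses, terminating zeros) leaves a reduced cyclic word in two letters: $D$ for a long block $110$ with minimal content, and $0$ for a free zero. I call this reduced word, considered up to the appropriate cyclic identification, the \emph{skeleton} $\sigma(x)$. The map $x\mapsto(\sigma(x),\text{choice of long blocks and excesses},\text{choice of isolated positions})$ should then be a bijection onto the set of all such triples with the given $\#1,\#110,L$.

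The core dynamical claim is that $\Omega=\{x:\sigma(x)\in\Sigma_\Omega\}$ for a set $\Sigma_\Omega$ of skeletons depending only on $\Omega$. I will prove this in two halves. \emph{Invariance:} I check from Table~\ref{fluxsg} that each one-step move preserves the skeleton class. The three moves are: an isolated $1$ hopping right, a $1110\to1101$ emission creating an isolated particle, and an isolated particle hitting a block. I will have to choose the identification of skeletons (possibly by a rotation of the free-zero labels) so that these moves act trivially on it; the examples $\Omega_1,\Omega_2$ for $L=10$ serve as a test. \emph{Reachability:} I show that all decorations compatible with a fixed skeleton communicate. For this I will use the positive-probability transitions already exploited in Lemma~\ref{lem:0630} and Lemma~\ref{lem:aperiodicity}: emit excess $1$'s from long blocks with probability $\alpha$, or hold them with probability $1-\alpha$, and let isolated particles travel until absorbed by the next block. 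These moves let me pass through a canonical representative, for instance all excess in one block and no isolated particles. Given the bijection, $N_\Omega(k_1,k_2)=|\Sigma_\Omega|\cdot\mathcal N(k_1,k_2)$, so $C_\Omega=|\Sigma_\Omega|>0$. The hypotheses $\#110\ge1$, $\#1-2\#110\ge1$ and $L-\#1\ge1$ ensure that the skeleton is nontrivial and that the emission and hopping moves are actually available.

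I expect the main obstacle to be pinning down the correct skeleton invariant, namely which cyclic data of the free zeros relative to the long blocks is preserved. An isolated particle leaving one block and joining the next effectively transports a $1$ across free zeros. The naive reduced word may therefore shift by one position, and the invariant may be a rotation class rather than the word itself. I would first compute $\sigma$ on $\Omega_1$ and $\Omega_2$ to guess the right normalization, and then verify invariance move by move.
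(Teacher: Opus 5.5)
Your decomposition is the paper's. A long block together with its terminating zero is the paper's $+$-site, and the remaining zeros are the $-$-sites. An isolated particle must be attached to the zero \emph{after} it, since in $11010$ the zero before it terminates a long block. Your count of decorations over one rooted skeleton is exactly $\mathcal N(k_1,k_2)$.

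The genuine gap is the final step: the bijection $x\mapsto(\sigma(x),\text{decorations})$ giving $C_\Omega=|\Sigma_\Omega|$. Decorations (``which long blocks, which free zeros'') are labelled only after a root zero is fixed. An unrooted cyclic skeleton supplies such a root canonically only when the $\pm$ word has no rotational symmetry, and then $C_\Omega=L$. In general no such bijection exists. Take $L=7$, $\#1=5$, $\#110=2$, where all hypotheses hold. The skeleton is $++$, and the component consists of the $7$ translates of $1110110$, all with $(k_1,k_2)=(1,0)$. But $\mathcal N(1,0)=\binom{2}{1}=2$, so $C_\Omega=7/2$, which is not the size of any set.

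The missing idea is the paper's double count. A pair ($x$, a marked zero of $x$) corresponds bijectively to a triple: a rooted $\pm$ word (one of the $|O|$ rotations), a decoration, and the position of the marked zero ($L$ choices). Hence $(L-\#1)N_\Omega(k_1,k_2)=L|O|\mathcal N(k_1,k_2)$ and $C_\Omega=L|O|/(L-\#1)$.

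The invariance you flag as the main obstacle is easy once zeros are tracked. Each moving particle hops over exactly one zero, so zeros keep their cyclic order and can be labelled $z_1,\dots,z_Z$ for all time. Let $n_i$ be the length of the block just before $z_i$. One step gives $n_i'=n_i-e_i+e_{i-1}$, where $e_i=1$ if $n_i=1$, $e_i$ is Bernoulli$(\alpha)$ if $n_i\ge3$, and $e_i=0$ if $n_i\in\{0,2\}$. Hence $n_i\ge2\iff n_i'\ge2$. Every zero keeps its label, so the $\pm$ word is invariant up to rotation and no renormalisation is needed.

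For reachability, a canonical decoration is not enough, because $\Omega$ contains all translates (the values of $N_{\Omega_1}$ sum to $80$, not $8$). You also need to realise the shift $S$. First let the long blocks hold while the isolated particles drain into them. Then pass one excess particle once around the ring. This restores every $n_i$ and moves every zero one site to the left, which is exactly $S$.
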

\begin{proof}
Choose a zero site as a root and list the zero sites cyclically.
For each zero site, look at the block of consecutive particles
immediately preceding it.  We assign the symbol $+$ to the zero site if
this block has length at least two, and the symbol $-$ otherwise.  The
cyclic word in the alphabet $\{+,-\}$ obtained in this way, considered
up to cyclic shifts, is called the skeleton orbit and is denoted by $O$.

A $+$-site accounts for one occurrence of the local pattern $110$.
Hence the number of $+$-sites is $\#110$, while the number of $-$-sites
is
$
L-\#1-\#110.
$
At a $+$-site, the two particles immediately preceding the zero are
regarded as fixed skeleton particles.  Any additional particles at a
$+$-site, and the possible single particle at a $-$-site, are called
free particles.

The update rule preserves the skeleton orbit $O$.  Moreover, under the
assumptions above and since $0<\alpha<1$, free particles can be
transported within a fixed skeleton with positive probability.  Hence
the irreducible components are precisely the sets $\Omega_O$ determined
by the skeleton orbits.

We first consider the case $k_1\geq1$.  For a fixed rooted skeleton word,
$\#010=k_2$ means that $k_2$ of the $L-\#1-\#110$ sites of type $-$
contain one free particle, while $\#1110=k_1$ means that $k_1$ of the
$\#110$ sites of type $+$ contain at least one free particle.
After choosing the $k_2$ sites of type $-$, the number of remaining free
particles is
$
R(k_2)=\#1-2\#110-k_2.
$
Thus, choosing the $k_1$ active sites of type $+$ and distributing these
$R(k_2)$ free particles among them, with each chosen site receiving at
least one particle, gives
exactly
$\mathcal{N}(k_1,k_2)$
configurations for each fixed rooted skeleton word.

The case $k_1=0$ is even simpler.  In this case no site of type $+$
contains a free particle.  Hence it is possible only when
$R(k_2)=0$, and then the only remaining choice is the choice of the
$k_2$ sites of type $-$, which gives
$
\mathcal{N}(0,k_2).
$
If $R(k_2)\neq0$, then no such configuration exists, and both sides are
zero.

Finally, if $\Omega=\Omega_O$, summing over the $|O|$ rooted skeleton
words and the $L$ possible root positions, and dividing by the number
$L-\#1$ of zeros in each configuration, yields
\begin{align}\label{eq:4.2a}
N_\Omega(k_1,k_2)
=
\frac{L|O|}{L-\#1}\mathcal{N}(k_1,k_2).
\end{align}
Thus the assertion holds with
\begin{align}\label{eq:4.2b}
C_\Omega=\frac{L|O|}{L-\#1}.
\end{align}
\end{proof}


We revisit the example in \eqref{omega_1} and \eqref{omega_2}.
According to the definition of $N_\Omega(k_1,k_2)$, one can easily calculate $N_{\Omega_1}(k_1,k_2)$ and $N_{\Omega_2}(k_1,k_2)$ as follows:
\[
\begin{array}{c|cccc}
(k_1,k_2) & (1,1) & (1,0) & (0,2) & (2,0)\\
\hline
N_{\Omega_1}(k_1,k_2) & 40 & 20 & 10 & 10\\
N_{\Omega_2}(k_1,k_2) & 20 & 10 & 5 & 5
\end{array}
\]
Since $L=10$, $\#1=6$ and $\#110=2$,
it holds that $L-\#1=4$ and $R(k_2)=2-k_2$.
Hence the nonzero values of the universal factor
$\mathcal{N}(k_1,k_2)$ are
\[
\mathcal{N}(1,1)
=4,\quad
\mathcal{N}(1,0)
=2,\quad
\mathcal{N}(0,2)
=1,\quad
\mathcal{N}(2,0)
=1.
\]
For all other pairs $(k_1,k_2)$, we have
$
\mathcal{N}(k_1,k_2)=0.
$
The components $\Omega_1,\Omega_2$ correspond to the skeleton orbits
\[
O_1=\operatorname{Orb}(\texttt{++--}),
\quad
O_2=\operatorname{Orb}(\texttt{+-+-}),
\]
respectively.
Since
$|O_1|=4$ and $|O_2|=2$,
\eqref{eq:4.2b} in Proposition~\ref{prop:component_independence} gives
\[
C_{\Omega_1}
=
\frac{L|O_1|}{L-\#1}
=
\frac{10\cdot4}{4}
=10,
\qquad
C_{\Omega_2}
=
\frac{L|O_2|}{L-\#1}
=
\frac{10\cdot2}{4}
=5.
\]
Therefore
we reproduce the above table for $N_{\Omega_1}(k_1,k_2)$ and $N_{\Omega_2}(k_1,k_2)$ via \eqref{eq:4.2a}.

\medskip

As a consequence of Theorem \ref{s:m1} and Proposition \ref{prop:component_independence}, we obtain the following estimate for $Q$, which closely aligns with the result \eqref{qd} on $Q_0$ in the deterministic case.

\begin{cor}
\label{cor:0617}
Assume that
\[
0<\alpha<1,\qquad
L-\#1\geq1,\qquad
\#110\geq1,\qquad
\#1-2\#110\geq1.
\]
Let the initial distribution of $v^0$ be an arbitrary probability
measure supported on
\[
\{x\in\{0,1\}^L:\#1(x)=\#1,\ \#110(x)=\#110\}.
\]
Then the mean flux $Q$ of the stochastic system \eqref{eqsg} is
independent of the initial distribution and is given by
\begin{equation}
\label{Qs}
Q=
\frac{1}{L}
\frac{
\displaystyle
\sum_{k_1,k_2}
(\alpha k_1+k_2)
\frac{\alpha^{k_2}}
{(1-\alpha)^{k_1+k_2}}
\mathcal{N}(k_1,k_2)
}{
\displaystyle
\sum_{k_1,k_2}
\frac{\alpha^{k_2}}
{(1-\alpha)^{k_1+k_2}}
\mathcal{N}(k_1,k_2)
},
\end{equation}
where $\mathcal{N}(k_1,k_2)$ is the universal factor given in \eqref{eq:universal}.
Consequently, the mean flux $Q_u=\rho_1-Q$ of the original stochastic
system \eqref{eqs} is uniquely determined by the pair of conserved
densities $(\rho_1,\rho_{110})$. Moreover,
\[
\lim_{\alpha\to1}Q_u
=
\max(2\rho_1-1,\,2\rho_{110}).
\]
\end{cor}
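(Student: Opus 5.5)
We combine the convergence statement of Proposition~\ref{prop:0617}, the explicit stationary measure of Theorem~\ref{s:m1}, and the component-independence of Proposition~\ref{prop:component_independence}.

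\textbf{Step 1: reduction to Ces\`aro averages.} Write $\nu=P(v^0=\cdot)$. Since $\#1$ and $\#110$ are conserved, $\nu$ lives on the union of the irreducible components $\Omega_O$ with these values. By Lemma~\ref{lem:0630} all of them are recurrent. By Proposition~\ref{prop:0617}(3), $\mathcal A^n\nu$ is, up to an error $C\kappa^n$, a combination $\sum_{j,i}\eta_{j,i}(\nu)h_{j,i+n}$. Components may be periodic (Lemma~\ref{lem:aperiodicity}), but under the assumption $\#1-2\#110\ge1$ there is a run of at least three particles, so $\#11\ge1$ and every component is aperiodic. Hence $r(j)=1$, $h_{j,1}=p_j$, and
\[
E[g(v^n)]\to\sum_j \eta_j(\nu)\sum_{x\in\Omega_j}g(x)p_j(x)
\quad\text{for } g=\alpha\rho_{1110}+\rho_{010}.
\]
The weights satisfy $\sum_j\eta_j(\nu)=1$, which follows from mass conservation and the fact that $\nu$ is supported on recurrent states. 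So the limit in \eqref{Q} exists.

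\textbf{Step 2: the value on each component.} On a component $\Omega$, Theorem~\ref{s:m1} gives $p(x)$ as a function of $(k_1,k_2)=(\#1110(x),\#010(x))$ only. Grouping configurations by $(k_1,k_2)$ yields
\[
\sum_x g(x)p(x)=\frac1L\,\frac{\sum(\alpha k_1+k_2)w(k_1,k_2)N_\Omega(k_1,k_2)}{\sum w(k_1,k_2)N_\Omega(k_1,k_2)},
\qquad w(k_1,k_2)=\frac{\alpha^{k_2}}{(1-\alpha)^{k_1+k_2}}.
\]
By Proposition~\ref{prop:component_independence}, $N_\Omega=C_\Omega\mathcal N$, and $C_\Omega$ cancels. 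The value is therefore the same for every $\Omega$, and \eqref{Qs} follows for any initial distribution. Because $\mathcal N$ depends only on $L,\#1,\#110$, the relation $Q_u=\rho_1-Q$ then shows that $Q_u$ is a function of $(\rho_1,\rho_{110})$ for fixed $L$.

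\textbf{Step 3: the limit $\alpha\to1$ (main obstacle).} Set $\epsilon=1-\alpha$. The weight is $\alpha^{k_2}\epsilon^{-(k_1+k_2)}$, so as $\epsilon\to0$ the ratio concentrates on admissible $(k_1,k_2)$ with $\mathcal N>0$ that maximize $k_1+k_2$. The numerator factor $\alpha k_1+k_2$ tends to $k_1+k_2$, so $Q\to M/L$, where $M$ is this maximum.

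The constraints are $k_2\le L-\#1-\#110$, $k_1\le\#110$, and $1\le k_1\le R(k_2)=\#1-2\#110-k_2$. Maximizing $k_1+k_2$ gives $M=\min(L-\#1,\ \#1-\#110)$:
\begin{itemize}
\item[(i)] Take $k_1=\#110$ and $k_2=L-\#1-\#110$ when the free-particle count allows it, i.e.\ when $R(k_2)\ge k_1$.
\item[(ii)] Otherwise the cap comes from $k_1+k_2\le\#1-2\#110+\ldots$; one checks $k_1\le R(k_2)$ forces $k_1+k_2\le \#1-2\#110$ with $k_1\le\#110$.
\end{itemize}
The bookkeeping in (ii) needs care. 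What I expect to verify is
\[
M=\min\bigl(L-\#1,\ \#1-2\#110\bigr)\quad\text{or the appropriate form.}
\]
This is the only nontrivial optimization. I would check it against $\rho_1-Q_u=Q\to M/L$, requiring $M/L=\rho_1-\max(2\rho_1-1,2\rho_{110})=\min(1-\rho_1,\ \rho_1-2\rho_{110})$. This matches the candidate $M=\min(L-\#1,\ \#1-2\#110)$, attained as follows:
\begin{itemize}
\item If $L-\#1\le\#1-2\#110$: take $k_2=L-\#1-\#110$ and $k_1=\#110$; here $R(k_2)\ge k_1$ holds.
\item Otherwise: take $k_1+k_2=R(k_2)+k_2=\#1-2\#110$ with $k_1\ge1$ feasible.
\end{itemize}
Showing that no admissible pair exceeds this value is an elementary inequality check. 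The limit as $\alpha\to1$ of the ratio of finite sums is justified because both sums are finite and nonnegative, and the leading-order term is nonzero.
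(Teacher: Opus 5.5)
Your proposal is correct and follows the paper's proof: every component is aperiodic, $\mathcal A^n\nu$ converges to the stationary measure by Proposition~\ref{prop:0617}, grouping by $(k_1,k_2)$ lets $C_\Omega$ from Proposition~\ref{prop:component_independence} cancel, and as $\alpha\to1$ the pairs maximizing $k_1+k_2$ dominate (you derive this maximum from the support of $\mathcal N$, whereas the paper cites \cite{3d} for it). Two points need fixing.

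First, your claim that $\#1-2\#110\ge1$ forces a run of three particles is false. For example, $0110010000$ satisfies all the hypotheses and has no such run. The conclusion $\#11\ge1$ still holds, because it follows directly from $\#110\ge1$, which is what the paper uses.

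Second, in Step~3 you should discard the first value $\min(L-\#1,\#1-\#110)$ and the back-solving from the target, and argue directly:
\begin{itemize}
\item $k_1\le\#110$ and $k_2\le L-\#1-\#110$ give $k_1+k_2\le L-\#1$;
\item $k_1\le R(k_2)$, or $k_1=R(k_2)=0$, gives $k_1+k_2\le\#1-2\#110$;
\item in your second case the bound is attained by $k_1=\min(\#110,\#1-2\#110)$ and $k_2=\#1-2\#110-k_1$.
\end{itemize}
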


 Figure~\ref{fdtc} shows a comparison between theoretical values by (\ref{Qs}) and numerical values.
\begin{figure}[h]
\centering
  \begin{overpic}[width=90mm,trim= -15pt -15pt -15pt -15pt, clip=false]{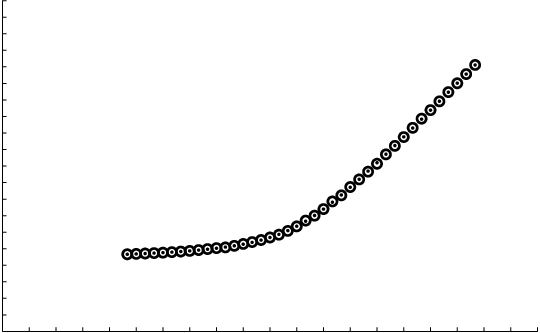} 
    \put(3,64){$Q_u$}

    \put(97,5){$\rho_1$}

    \put(3,0){\scriptsize 0.0}
    \put(21,0){\scriptsize 0.2}
    \put(39,0){\scriptsize 0.4}
    \put(57,0){\scriptsize 0.6}
    \put(75,0){\scriptsize 0.8}
    \put(93,0){\scriptsize 1.0}

    \put(-1, 4){\scriptsize 0.0}
    \put(-1,15){\scriptsize 0.2}
    \put(-1,26){\scriptsize 0.4}
    \put(-1,37){\scriptsize 0.6}
    \put(-1,48){\scriptsize 0.8}
    \put(-1,59){\scriptsize 1.0}
  \end{overpic}
\caption{Example of fundamental diagram of (\ref{eqs}) for $\rho_{110}=7/60$. Small dots ($\bullet$) are obtained by (\ref{Qs}) for $L=60, \alpha=0.7$. Circles ($\bigcirc$) are numerical results for the same parameters averaged from $n=0$ to 3000.}
\label{fdtc}
\end{figure}

\begin{proof}[Proof of Corollary \ref{cor:0617}]
In this proof, we write $Q(\nu)$ when the initial distribution of
$v^0$ is $\nu$.

First assume that $\nu$ is supported on a single irreducible component
$\Omega$ in $\{x\in\{0,1\}^L:\#1(x)=\#1,\ \#110(x)=\#110\}$.  Since $\#110\geq1$, every configuration in $\Omega$ contains the local
pattern $110$, and hence also contains the local pattern $11$.  Hence,
by Lemma~\ref{lem:aperiodicity}, the component $\Omega$ is aperiodic.

By Proposition~\ref{prop:0617}, the distribution of $v^n$ converges to
the unique stationary measure $p$ on $\Omega$.  Therefore, by
\eqref{Q} and Theorem~\ref{s:m1},
\[
Q(\nu)
=
\frac{1}{L}
\sum_{x\in\Omega}
\bigl(\alpha\#1110(x)+\#010(x)\bigr)p(x).
\]
Using the expression of $p$ in Theorem~\ref{s:m1}, this becomes
\[
Q(\nu)=
\frac{1}{L}
\frac{
\displaystyle
\sum_{k_1,k_2}
(\alpha k_1+k_2)
\frac{\alpha^{k_2}}
{(1-\alpha)^{k_1+k_2}}
N_\Omega(k_1,k_2)
}{
\displaystyle
\sum_{k_1,k_2}
\frac{\alpha^{k_2}}
{(1-\alpha)^{k_1+k_2}}
N_\Omega(k_1,k_2)
}.
\]
By Proposition~\ref{prop:component_independence},
$
N_\Omega(k_1,k_2)=C_\Omega\mathcal{N}(k_1,k_2),
$
where $C_\Omega>0$ is independent of $k_1$ and $k_2$.  This constant
cancels between the numerator and the denominator, and hence we obtain
\eqref{Qs}.  In particular, the value of $Q(\nu)$ is independent of
the irreducible component $\Omega$.

We now consider a general initial distribution $\nu$ supported on
$\{x\in\{0,1\}^L:\#1(x)=\#1,\ \#110(x)=\#110\}$.  Decompose
\[
\{x\in\{0,1\}^L:\#1(x)=\#1,\ \#110(x)=\#110\}=\bigsqcup_{\lambda\in\Lambda}\Omega_\lambda
\]
into irreducible components.  Put
$
w_\lambda:=\nu(\Omega_\lambda).
$
For $w_\lambda>0$, let $\nu_\lambda$ be the normalized restriction of
$\nu$ to $\Omega_\lambda$.  Since each $\Omega_\lambda$ is closed, the
Markov chain started from $\nu_\lambda$ stays in $\Omega_\lambda$.
By the first part of the proof,
\[
Q(\nu_\lambda)=Q_*
\]
for all $\lambda$ with $w_\lambda>0$, where $Q_*$ denotes the right-hand
side of \eqref{Qs}.  Therefore,
\[
Q
=
\sum_{\lambda\in\Lambda}w_\lambda Q(\nu_\lambda)
=
\sum_{\lambda\in\Lambda}w_\lambda Q_*
=
Q_*.
\]
Thus \eqref{Qs} holds for every initial distribution supported on
$\{x\in\{0,1\}^L:\#1(x)=\#1,\ \#110(x)=\#110\}$.

Since $Q_u=\rho_1-Q$, the mean flux $Q_u$ of the original system
\eqref{eqs} is also uniquely determined by $(\rho_1,\rho_{110})$.

It remains to prove the zero-noise limit. Notice that the maximum value of $k_1+k_2$ (i.e., the maximum value of the sum of numbers of local patterns 1110 and 010 in $\Omega$) is 
$\min{\left(L-\# 1,\# 1-2\# 110\right)}$ (cf.~\cite{3d}). 
Since the terms $\alpha^{k_2}/\left(1-\alpha\right)^{k_1+k_2}$ of maximum $k_1+k_2$ remains in the limit $\alpha \to 1$ for (\ref{Qs}), we have
\begin{eqnarray*}
\lim_{\alpha \to 1} (\rho_1-Q)&=&\frac{\# 1}{L} \\
&-&\frac{1}{L}\frac{\min{\left(L-\# 1,\# 1-2\# 110\right)}\sum_{k_1+k_2=\min{\left(L-\# 1,\# 1-2\# 110\right)}}{\mathcal{N}(k_1,k_2)}}{\sum_{k_1+k_2=\min{\left(L-\# 1,\# 1-2\# 110\right)}}{\mathcal{N}(k_1,k_2)}} \\
&=&\frac{\# 1-\min{\left(L-\# 1,\# 1-2\# 110\right)}}{L}=\rho_1-\min{\left(1-\rho_1,\rho_1-2\rho_{110}\right)}\\
&=&\rho_1+\max{\left(\rho_1-1, 2\rho_{110}-\rho_1\right)}=\max{\left(2\rho_1-1, 2\rho_{110} \right)}.
\end{eqnarray*}
This completes the proof.
\end{proof}

\section{Proof of Theorem \ref{s:m1}}\label{s:pf}
\subsection{Preliminary}
In this section, we prove Theorem \ref{s:m1}.
Multiplying both sides of \eqref{eq:1110goal} by the denominator of \eqref{dist}, it suffices to prove the equivalent identity
\begin{equation}\label{eq:1110goal2}
\sum _{x'\in \Omega} f(x',x) \frac{\alpha ^{\# 010(x')}}{(1-\alpha )^{\# 1110(x')+\# 010(x')}} =\frac{\alpha ^{\# 010(x)}}{(1-\alpha )^{\# 1110(x)+\# 010(x)}}.  
\end{equation}
The probability $f(x', x)$ depends on $\# 1110(x')$. If we denote by $\ell(x', x)$ the number of transitions from $1110$ to $1101$ during the update from $x'$ to $x$, we have
\begin{equation*}
  f(x', x)= \alpha ^{\ell(x', x)} (1-\alpha )^{\# 1110(x')-\ell(x', x)}.  
\end{equation*}
For $x\in\Omega$, put
\[
B(x):=\{x'\in\Omega:f(x',x)>0\}.
\]
Then the left-hand side of \eqref{eq:1110goal2} can be expressed as
\begin{equation}  \label{1}
\begin{aligned}
  \sum _{x'\in B(x)} \left( \frac{\alpha }{1-\alpha }\right) ^{\ell(x', x)+\# 010(x')}
\end{aligned}
\end{equation}
Thus, our goal reduces to proving the following identity:
\begin{equation}\label{eq:1110goal3}
  \sum _{x'\in B(x)} \left( \frac{\alpha }{1-\alpha }\right) ^{\ell(x', x)+\# 010(x')} = \frac{\alpha ^{\# 010(x)}}{(1-\alpha )^{\# 1110(x)+\# 010(x)}}.  
\end{equation}

\subsection{Idea of the proof of Theorem \ref{s:m1}}
To prove that the identity \eqref{eq:1110goal3} holds, we need to establish several lemmas. In this subsection, we present only an outline of the proof.
To this end, it is necessary to analyze $\ell(x', x)$, $\#010(x')$, and $B(x)$ in detail.

The quantity $\ell(x', x)$ represents the number of transitions from $1110$ to $1101$. It is important to note that not all $1101$ patterns in $x$ originate from transitions of $1110$ in $x'$. 
For example, a pattern such as $110110$ within $x$ cannot arise from any configuration that included $1110$ at the previous time step (Lemma~\ref{lemma:outside}).
On the other hand, a pattern such as $110100$ in $x$ must have originated from $11100$ and cannot result from any other pattern (Lemma~\ref{lemma:1101}).
Taking these facts into account, we classify the previous-time patterns associated with $1101$ in Table~\ref{table_of_1101}.
\begin{table}[h]
\caption{Previous-time patterns leading to $1101$}
\label{table_of_1101}
\small
\begin{center}
\begin{tabular}{lcll}
  Previous-time pattern(s) & & Current-time pattern & Transition Uniqueness\\
\hline 
    $\cdots |1110|0$ & $\to$ & $\cdots |1101|00$ & Uniquely determined \\
    & $\to$ & $\cdots |1101|10$ & Not attainable from $1110$ \\
    $\cdots |1110|1$ \quad or \quad $\cdots |1101|1$ & $\to$ & $\cdots |1101|11$ & Non-unique \\
    $\cdots |1110|10$ & $\to$ & $\cdots |1101|010$ & Uniquely determined \\
    $\cdots |1110|011$ & $\to$ & $\cdots |1101|0110$ & Uniquely determined \\
    $\cdots |1110|10$ \quad or \quad $\cdots |1110|01$ & $\to$ & $\cdots |1101|0111$ & Non-unique
\end{tabular}
\end{center}
\end{table}
We now provide a detailed explanation of the notation and interpretation used in this table.  
In Table~\ref{table_of_1101}, the symbol $|\cdot|$ is used to clarify the alignment. We omit parts irrelevant to the local pattern under consideration or those that cannot be determined from the given pattern alone. 
The column titled "Transition uniqueness" indicates how uniquely each pattern in $x$ can be traced back to a pattern in $x'$. 
"Uniquely determined" means that the current-time pattern in $x$ can arise only from the listed pattern in $x'$. 
"Non-unique" means that the pattern in $x$ may result from multiple previous-time patterns. 
"Not attainable" means that the current-time pattern cannot be generated from the corresponding pattern in $x'$.
Note that in the final row, the pattern $11010111$ can arise from two different configurations, but this does not alter the value of $\ell(x', x)$.
These characterizations will be rigorously justified in the lemmas presented in the next subsection.

Next, let us consider $\#010(x')$. The pattern $010$ deterministically transitions to either $001$ or $101$ (Lemma~\ref{lemma:v^n}(iii)). 
We classify the previous-time patterns that give rise to $001$ or $101$ in $x$, excluding cases already considered for $1101$, in Table~\ref{table_of_01}.
\begin{table}[h]
\caption{Previous-time patterns leading to $001$ or $101$}
\label{table_of_01}
\begin{center}
\begin{tabular}{lcll}
  Previous-time pattern(s) & & Current-time pattern & Transition Uniqueness\\
\hline
    $*|010|10$ \quad or \quad $*|010|0$ & $\to$ & $\cdots |001|0$ & Non-unique  \\
    & $\to$ & $\cdots |001|10$ & Not attainable from $010$ \\
    $*|010|11$ \quad or \quad $*|001|11$ & $\to$ & $\cdots |001|11$ & Non-unique  \\
    $*1|010|0$ & $\to$ & $\cdots 0|101|0$ & Uniquely determined \\
    & $\to$ & $\cdots 0|101|10$ & Not attainable from $010$ \\
    $*1|010|11$ \quad or \quad $*1|001|11$ & $\to$ & $\cdots 0|101|11$ & Non-unique  
\end{tabular}
\end{center}
\end{table}
Here, the symbol `*` denotes either $0$ or $11$. As in Table~\ref{table_of_1101}, note that the first row's $0010$ can arise from two distinct patterns in $x'$, but in both cases, the value of $\# 010(x')$ remains unchanged.

According to these classifications, for a given configuration $x$, the values of $\ell(x', x)$ and $\#010(x')$ are determined, as shown in Lemma~\ref{lemma:numberB}.
The elements of $B(x)$ are also derived based on these classifications.  
Substituting these results into \eqref{eq:1110goal3} confirms the identity. 
See \eqref{eq:proof_of_goal} for the details of the final step.

\subsection{Several Properties}
From Table~\ref{fluxsg}, the following identities for the flux function $q$ hold for all $a, b, c, d \in \{0, 1\}$.  
\begin{equation}  \label{q}
  q(a, b, c, 1)=0, \quad q(a, b, 0, d)=0, \quad q(a, 0, 1, 0)=1,  \quad q(a, 1, 0, 1)=0.
\end{equation}
Based on these, the following lemmas hold. Here, we let $x_j = v_j^{n+1}$ and $x'_j = v_j^n$ for simplicity.  
\begin{lem} \label{lemma:v^n}$\quad$
\begin{enumerate} 
  \item[(i)] If $x'_{j}x'_{j+1} = 00$, then $x_{j+1} = 0$.
  \item[(ii)] If $x'_{j} x'_{j+1} = 11$, then $x_{j} = 1$.
  \item[(iii)] If $x'_{j-1}x'_{j} x'_{j+1} = 010$, then $x_{j} x_{j+1} = 01$.
  \item[(iv)] If $x'_{j}x'_{j+1} x'_{j+2} = 101$, then $x_{j} x_{j+1} = 10$ or $01$.
  \item[(v)] If $x'_{j-2}x'_{j-1}x'_{j} x'_{j+1} = 0110$, then $x_{j-1}x_{j} x_{j+1} = 110$.  In particular, if $x'_{j-2}x'_{j-1} x'_{j} = 011$, then $x_{j-1} x_{j} = 11$.
\end{enumerate}
\end{lem}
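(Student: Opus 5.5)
Each item follows from the update rule \eqref{eqsg}, $x_j=x'_j+q(x'_{j-3},x'_{j-2},x'_{j-1},x'_j)-q(x'_{j-2},x'_{j-1},x'_j,x'_{j+1})$, together with the four identities \eqref{q}. For each item I will evaluate the inflow term $q(\cdot,\cdot,\cdot,x'_j)$ and the outflow term $q(\cdot,\cdot,x'_j,x'_{j+1})$ at the relevant sites, using only the entries fixed by the hypothesis. Where the hypothesis leaves a term undetermined, I will show that the binary constraint $x_j\in\{0,1\}$ settles it.

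For (i), apply the rule at site $j+1$ with $x'_jx'_{j+1}=00$. The inflow $q(x'_{j-2},x'_{j-1},x'_j,x'_{j+1})$ has third argument $x'_j=0$, so it vanishes by $q(a,b,0,d)=0$. Then $x_{j+1}=0-q(\dots,x'_{j+1},x'_{j+2})\le 0$, which forces $x_{j+1}=0$. For (ii), apply the rule at site $j$ with $x'_jx'_{j+1}=11$. The outflow $q(x'_{j-2},x'_{j-1},x'_j,x'_{j+1})$ has last argument $1$, so it vanishes by $q(a,b,c,1)=0$. Then $x_j=1+(\text{inflow})\ge1$, so $x_j=1$.

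For (iii), with $x'_{j-1}x'_jx'_{j+1}=010$, the outflow at site $j$ is $q(x'_{j-2},0,1,0)=1$. The inflow at $j$ has last argument $x'_j=1$, so it is $0$, and hence $x_j=0$. At site $j+1$ the inflow is the same quantity $q(x'_{j-2},0,1,0)=1$, so $x_{j+1}=0+1-(\text{outflow})$. Binarity forces the outflow to be $0$, giving $x_{j+1}=1$. (Directly, the outflow has third argument $x'_{j+1}=0$.)

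For (iv), with $x'_jx'_{j+1}x'_{j+2}=101$, the outflow at $j$ is $q(x'_{j-2},x'_{j-1},1,0)\in\{0,1\}$, and the inflow at $j$ vanishes because its last argument is $1$. Thus $x_j=1-\epsilon$, with $\epsilon$ that outflow. At $j+1$ the inflow is the same $\epsilon$, and the outflow $q(x'_{j-1},1,0,1)$ vanishes by $q(a,1,0,1)=0$. Hence $x_{j+1}=\epsilon$, and $x_jx_{j+1}\in\{10,01\}$.

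For (v), with $x'_{j-2}x'_{j-1}x'_jx'_{j+1}=0110$: (ii) applied at $j-1$ gives $x_{j-1}=1$. At $j$, the outflow is $q(x'_{j-2},x'_{j-1},x'_j,x'_{j+1})=q(0,1,1,0)=0$ from Table~\ref{fluxsg}, so $x_j=1+\text{inflow}$, forcing $x_j=1$. At $j+1$, the inflow is this same $q(0,1,1,0)=0$, so $x_{j+1}=0-\text{outflow}$, forcing $x_{j+1}=0$. The ``in particular'' part uses only $x'_{j-2}x'_{j-1}x'_j=011$. Item (ii) gives $x_{j-1}=1$. For $x_j$, the outflow at $j$ is $q(0,1,1,x'_{j+1})$, which is $0$ for both values of $x'_{j+1}$ (Table~\ref{fluxsg} has $0111\mapsto0$ and $0110\mapsto0$). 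So again $x_j=1$.

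No step is a real obstacle. The only care needed is to identify which of the two flux terms is shared between adjacent sites, and to use binarity of $x_j$ in place of computing undetermined terms.
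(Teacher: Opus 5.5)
Your proof is correct and takes the same approach as the paper, which just says each item follows directly from \eqref{eqsg} and \eqref{q}; you have written out that site-by-site check of the inflow and outflow terms explicitly. One point in your favor: in (v) you correctly invoke the table entry $q(0,1,1,0)=0$ from Table~\ref{fluxsg}, which is not among the four identities in \eqref{q}, so the paper's one-line justification is slightly incomplete there while yours is not.
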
 
\begin{proof}
Each statement follows directly from \eqref{eqsg} and \eqref{q}.
\end{proof}
\begin{lem} \label{lemma:v^n+1}$\quad$
\begin{enumerate}
  \item[(i)] If $x_{j-1} x_{j} = 11$, then $x'_j = 1$.
  \item[(ii)] If $x_{j-1} x_{j} = 00$, then $x'_{j-1} = 0$.
  \item[(iii)] If $x_{j} x_{j+1} = 01$, then $x'_j x'_{j+1} = 10$ or $01$.
  \item[(iv)] If $x_{j} x_{j+1} x_{j+2} x_{j+3} = 0101$, then $x'_j x'_{j+1} = 10$.
\end{enumerate}
\end{lem}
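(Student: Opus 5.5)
The plan is to argue directly from the update rule \eqref{eqsg} together with the explicit form of $q$ in Table~\ref{fluxsg}, reusing Lemma~\ref{lemma:v^n} wherever possible. The one structural fact I will extract first is that $q(x'_{j-2},x'_{j-1},x'_j,x'_{j+1})$ is the flux from site $j$ to site $j+1$. It equals $1$ only if $x'_jx'_{j+1}=10$ and, in addition, either $x'_{j-1}=0$, or $x'_{j-2}x'_{j-1}=11$ with $b=1$. In particular, a particle leaves site $j$ only when $x'_{j+1}=0$, and a particle enters site $j$ only when $x'_{j-1}x'_j=10$. Hence $x_j=x'_j+(\text{inflow})-(\text{outflow})$ with each flux in $\{0,1\}$.

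For (i), suppose $x_{j-1}x_j=11$ but $x'_j=0$. Then $x_j=1$ forces an inflow from $j-1$, so $x'_{j-1}=1$ and that particle leaves. Then $x_{j-1}=1$ forces an inflow into $j-1$, which needs $x'_{j-1}=0$, a contradiction. For (ii), suppose $x_{j-1}x_j=00$ and $x'_{j-1}=1$. This particle must leave, which requires $x'_j=0$. The particle then arrives at $j$, and since site $j$ has no outflow when $x'_j=0$, we get $x_j=1$, a contradiction. For (iii), exclude $x'_jx'_{j+1}=00$ and $11$. If they equal $00$, Lemma~\ref{lemma:v^n}(i) gives $x_{j+1}=0$. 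If they equal $11$, Lemma~\ref{lemma:v^n}(ii) gives $x_j=1$. Both contradict $x_jx_{j+1}=01$.

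For (iv), (iii) applied twice gives $x'_jx'_{j+1}\in\{10,01\}$ and $x'_{j+2}x'_{j+3}\in\{10,01\}$. It remains to rule out $x'_jx'_{j+1}=01$, and I split on $x'_{j+2}$.

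If $x'_{j+2}=0$, then $x'_jx'_{j+1}x'_{j+2}=010$. Lemma~\ref{lemma:v^n}(iii) then yields $x_{j+1}x_{j+2}=01$, contradicting $x_{j+1}=0$.

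If $x'_{j+2}=1$, then $x'_{j+3}=0$. Since $x_{j+2}=0$, the particle at $j+2$ must leave, with no inflow compensating. Leaving requires $x'_{j+1}=0$, or $x'_jx'_{j+1}=11$. Neither holds, since $x'_{j+1}=1$ and $x'_j=0$.

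This is the only step needing a genuine case split; everything else is one-line bookkeeping, so I expect (iv) to be the main (mild) obstacle.
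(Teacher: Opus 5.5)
Your proof is correct and takes essentially the same approach as the paper. Parts (i) and (ii) rest on the same two facts, $q(a,b,c,1)=0$ and $q(a,b,0,d)=0$; you phrase them as site-by-site inflow/outflow bookkeeping where the paper uses a telescoped two-site sum. Part (iii) is identical to the paper's. In (iv), your split on $x'_{j+2}$ rules out the same two predecessors the paper rules out: $0101$ via Lemma~\ref{lemma:v^n}(iii), and $0110$ via your direct flux check, which stands in for Lemma~\ref{lemma:v^n}(v).

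One small slip: in the case $x'_{j+2}=0$ of (iv), the derived $x_{j+1}x_{j+2}=01$ contradicts the hypothesis $x_{j+1}=1$ (and also $x_{j+2}=0$), not ``$x_{j+1}=0$'' as you wrote.
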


\begin{proof}
(i)\ From \eqref{eqsg}, we have
\[
2 = x_{j-1} + x_j 
  = x'_{j-1} + x'_j + q(x'_{j-4}, x'_{j-3}, x'_{j-2}, x'_{j-1}) 
- q(x'_{j-2}, x'_{j-1}, x'_j, x'_{j+1}).
\]
Since $x'_j \in \{0,1\}$ and $q(a,b,c,d) \in \{0,1\}$, for the above equation to hold with $x'_j = 0$, it must be that
\[
x'_{j-1} = 1,\quad q(x'_{j-4}, x'_{j-3}, x'_{j-2}, 1) = 1,\quad q(x'_{j-2}, 1, 0, x'_{j+1}) = 0.
\]
However, by \eqref{q}, we have $q(a, b, c, 1) = 0$, leading to a contradiction. Therefore, $x'_j = 1$.

(ii)\ The proof is similar to (i). From \eqref{eqsg},
\[
0 = x_{j-1} + x_j 
  = x'_{j-1} + x'_j + q(x'_{j-4}, x'_{j-3}, x'_{j-2}, x'_{j-1})
 - q(x'_{j-2}, x'_{j-1}, x'_j, x'_{j+1}).
\]
For this equation to hold with $x'_{j-1} = 1$, it must be that
\[
x'_j = 0,\quad q(x'_{j-4}, x'_{j-3}, x'_{j-2}, 1) = 0,\quad q(x'_{j-2}, 1, 0, x'_{j+1}) = 1.
\]
But again, $q(a, b, 0, d) = 0$ by \eqref{q}, yielding a contradiction. Hence, $x'_{j-1} = 0$.

(iii)\ If $x'_j = x'_{j+1} = 0$, then by Lemma \ref{lemma:v^n}(i), we have $x_{j+1} = 0$, contradicting the assumption.  
If $x'_j = x'_{j+1} = 1$, then by Lemma \ref{lemma:v^n}(ii), $x_j = 1$, again contradicting the assumption.  
Therefore, the only possibilities are $x'_j x'_{j+1} = 10$ or $01$.

(iv)\ By Lemma \ref{lemma:v^n+1}(iii), the tuple $x'_j x'_{j+1} x'_{j+2} x'_{j+3}$ must be one of  
\[
1010,\quad 1001,\quad 0110,\quad 0101.
\]
In the last two cases, Lemma \ref{lemma:v^n}(v) and (iii) imply $x_{j+2} = 1$, which contradicts the assumption.  
Thus, only the case $x'_j x'_{j+1} = 10$ is valid.
\end{proof}

\subsection{Proof of (\ref{eq:1110goal3})}
To evaluate the summation in (\ref{eq:1110goal3}), we analyze $B(x)$, $\ell(x', x)$, and $\# 010(x')$ for a given configuration $x$. 
The configuration is decomposed according to the following rules:
\begin{itemize}
  \item Group every occurrence of $110$ as a unit, denoted by $(110)$.
  \item Group each $(110)1$ and $01$ as $[(110)1]$ or $[01]$.
  \item Group sequences of the form $[(110)1][01][01]\cdots[01]$ as $\{[(110)1][01][01]\cdots[01]\}$.
  \item Replace $[(110)1]$ not enclosed in $\{\}$ by $\ldbr (110)1 \rdbr$.
  \item Enclose each sequence of $[01][01]\cdots[01]$ not already within $\{\}$ by $\la [01][01]\cdots[01] \ra$.
\end{itemize}

\noindent\textbf{Example.}  For a configuration  
\begin{equation}  \label{ex of x}
0011101010101110111101100110001101001011111000000 ,
\end{equation}
it is decomposed as
{\small
\begin{equation*}
0  \la [01]\ra \{[(110)1][01][01][01]\}\ldbr (110)1\rdbr 1(110)(110)0(110)00\ldbr (110)1\rdbr 0\la [01][01]\ra 11(110)00000.  
\end{equation*}
}
The possible of previous-time local patterns that are enclosed by $\{\}$, $\ldbr \rdbr$, or $\la \ra$ in the current configuration are characterized by the following lemmas.

\begin{lem}\label{lemma:1101}
The previous-time local pattern of $\ldbr (110)1 \rdbr$ is either:
\[
1110 \quad \text{or} \quad 1101
\]
The second case occurs only when the local pattern to the right of $(110)1$ is $11$.  
\end{lem}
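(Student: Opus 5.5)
The plan is to pin down the previous-time values $x'_{j}x'_{j+1}x'_{j+2}x'_{j+3}$ for an occurrence $x_jx_{j+1}x_{j+2}x_{j+3}=1101$ that forms a $\ldbr (110)1\rdbr$. I will use only the one-step backward constraints of Lemma~\ref{lemma:v^n+1} and the forward constraints of Lemma~\ref{lemma:v^n}.

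Because the block is $\ldbr\,\rdbr$ rather than $\{\,\}$, the pattern $(110)1$ is not followed by $[01]$, so $x_{j+4}x_{j+5}\neq 01$. In addition, $x_{j-1}=0$, since $(110)$ is the grouped unit and the $11$ sits right after a zero or a longer block. I will handle that boundary separately if it matters.

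First, Lemma~\ref{lemma:v^n+1}(i) applied to $x_jx_{j+1}=11$ gives $x'_{j+1}=1$. Lemma~\ref{lemma:v^n+1}(iii) applied to $x_{j+2}x_{j+3}=01$ gives $x'_{j+2}x'_{j+3}\in\{10,01\}$.

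Next, I determine $x'_j$. If $x'_j=0$, then $x'_jx'_{j+1}=01$. To keep $x_j=1$, a particle must move into site $j$, which requires $x'_{j-1}=1$, i.e.\ a $010$ hop at $j-1$. But then $x'_{j-1}x'_jx'_{j+1}=101$, and by Lemma~\ref{lemma:v^n}(iv) and (iii) the site $j+1$ gets a particle only if it stays. I expect this case to collapse by contradiction, after checking the flux table \eqref{q} directly. So $x'_jx'_{j+1}=11$, and the previous pattern is $1110$ or $1101$.

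For the second case $x'_{j}\cdots x'_{j+3}=1101$, the particle at $j+3$ must not move right while site $j+2$ stays empty. By \eqref{q}, an isolated $010$ always moves and a pair $0110$ stays. Hence $x'_{j+3}$ must belong to a block of length at least two that is not followed by a hop into $j+4$, so $x'_{j+3}x'_{j+4}=11$. By Lemma~\ref{lemma:v^n}(ii) this gives $x_{j+3}=1$. For $x_{j+4}$, the block starting at $j+3$ has length at least $2$. If it has length exactly $2$ (pattern $0110$), Lemma~\ref{lemma:v^n}(v) gives $x_{j+3}x_{j+4}=11$. If it is longer, only its rightmost particle can move, so again $x_{j+4}=1$. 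Either way the pattern to the right of $(110)1$ is $11$, which is the claimed restriction.

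The main obstacle I foresee is the bookkeeping of the left neighbor $x_{j-1}$ and $x'_{j-1}$ when excluding $x'_j=0$. Here I would enumerate $x'_{j-2}x'_{j-1}\in\{00,01,10,11\}$ against the rule table, using Lemma~\ref{lemma:v^n+1}(ii) where a $00$ appears in $x$.
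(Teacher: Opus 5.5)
Your frame is the right one: Lemma~\ref{lemma:v^n+1}(i),(iii) give $x'_{j+1}=1$ and $x'_{j+2}x'_{j+3}\in\{10,01\}$, then you decide $x'_j$, then you look to the right. However, both halves have a genuine hole.

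\textbf{Excluding $x'_j=0$.} This step is never carried out, and the route you propose for it cannot succeed. The left side is perfectly consistent with $x'_j=0$, $x_j=1$. For instance, $x'_{j-2}x'_{j-1}x'_jx'_{j+1}=0101$ sends a particle into site $j$. A hop from a block $1110$ with $b=1$ also works, so it need not be a ``$010$ hop''. Enumerating $x'_{j-2}x'_{j-1}$ therefore yields no contradiction.

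The contradiction sits on the right and uses only what you already derived. With $x'_jx'_{j+1}=01$ and $x'_{j+2}x'_{j+3}\in\{10,01\}$, the window $x'_j\cdots x'_{j+3}$ is either $0110$ or $0101$. If it is $0110$, Lemma~\ref{lemma:v^n}(v) gives $x_{j+1}x_{j+2}x_{j+3}=110$, contradicting $x_{j+2}=0$. If it is $0101$, Lemma~\ref{lemma:v^n}(iii) gives $x_{j+1}x_{j+2}=01$, contradicting $x_{j+1}=1$. This is precisely the paper's argument. Your side remark $x_{j-1}=0$ is also false: in the paper's example the first $\ldbr(110)1\rdbr$ is preceded by a $1$. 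You never use it, though.

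\textbf{The right neighbor in the $1101$ case.} You prove only $x_{j+4}=1$ and then declare that the pattern to the right of $(110)1$ is $11$. But that pattern is $x_{j+4}x_{j+5}$; $x_{j+3}$ is part of $(110)1$ itself. The dynamics do not force $x_{j+5}=1$:
\begin{itemize}
\item In your length-two subcase, Lemma~\ref{lemma:v^n}(v) read in full gives $x_{j+3}x_{j+4}x_{j+5}=110$, i.e.\ a right neighbor $10$.
\item If the block starting at $j+3$ has length exactly three, its rightmost particle (at $j+5$) hops with probability $\alpha$, again giving $x_{j+5}=0$.
\end{itemize}

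What is missing is the decomposition rule. If $x_{j+4}x_{j+5}=10$, then $x_{j+3}x_{j+4}x_{j+5}=110$ is itself a $(110)$ unit, and the block parses as $(110)(110)$, not as $\ldbr(110)1\rdbr$. So these subcases must be discarded rather than included under ``either way''. Only then does $x_{j+4}=1$ upgrade to $x_{j+4}x_{j+5}=11$, which is the form used later in $\#\rdbr 11(x)=\#110111(x)$.
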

\begin{proof}
The predecessor patterns of $1101$ are $0101$, $0110$, $1110$, and $1101$(Lemma~\ref{lemma:v^n+1}(i) and (iii)). 
Among these, according to Lemma \ref{lemma:v^n}(iii) and (v), $0101$ and $0110$ do not result in the structure $\ldbr (110)1 \rdbr$ at the next time step.  Therefore, the first claim follows.

Suppose the symbol immediately to the right of $\ldbr (110)1 \rdbr$ is $0$. 
In that situation, the $1$ in the final $\ldbr (110)1 \rdbr$ would move one site to the right during the update(Lemma \ref{lemma:v^n}(iii)), and the resulting configuration at the next time step would no longer contain the structure $\ldbr (110)1 \rdbr$. 
Therefore, the symbol to the right must be $1$. 
Furthermore, if the next two sites are $10$, then the final $1$ in $\ldbr (110)1 \rdbr$, together with the following $0$, would form a $(110)$ pattern, contradicting the decomposition rule.
\end{proof}

\begin{lem}\label{lemma:0101}
The previous-time local pattern  of $\la [01][01] \cdots[01]\ra$ is either:
\[
  1010 \cdots1010 \quad \text{or} \quad 1010\cdots1001
\]
The second case occurs only when the pattern to the right of $\la [01][01] \cdots[01]\ra$ is $11$.  
\end{lem}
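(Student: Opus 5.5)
We argue in parallel with the proof of Lemma~\ref{lemma:1101}: first the single blocks $[01]$ inside $\la\cdots\ra$ are traced back by Lemma~\ref{lemma:v^n+1}, and then the two possible endings are isolated by looking at the right neighbour.

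Write the block as $x_jx_{j+1}\cdots x_{j+2m-1}=0101\cdots01$ with $m\ge1$ copies of $[01]$. By the decomposition rule, the block is not the tail of a $\{\}$-group. Hence it is not preceded by $(110)1$, and the site $x_{j-1}$ together with what lies to its left is not of the form $1101$.

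\emph{Step 1: all blocks but the last.} For every $[01]$ that is followed by another $[01]$, the pattern $x_{j+2i}x_{j+2i+1}x_{j+2i+2}x_{j+2i+3}=0101$ holds for $0\le i\le m-2$. Lemma~\ref{lemma:v^n+1}(iv) then gives $x'_{j+2i}x'_{j+2i+1}=10$.

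\emph{Step 2: the last block.} Lemma~\ref{lemma:v^n+1}(iii) leaves two possibilities, $x'_{j+2m-2}x'_{j+2m-1}\in\{10,01\}$. This already yields the two listed predecessor shapes $1010\cdots1010$ and $1010\cdots1001$.

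\emph{Step 3: when the second shape can occur.} Assume the last pair is $01$, so that $x'_{j+2m-3}x'_{j+2m-2}x'_{j+2m-1}=001$ (if $m=1$, use instead the left boundary condition). Suppose $x'_{j+2m}=0$. Then $x'$ contains $010$ ending at $j+2m$, and Lemma~\ref{lemma:v^n}(iii) forces $x_{j+2m-1}x_{j+2m}=01$, so $x_{j+2m-1}=0$. This contradicts $x_{j+2m-1}=1$. Hence $x'_{j+2m}=1$, and $x'_{j+2m-1}x'_{j+2m}=11$. Lemma~\ref{lemma:v^n}(ii) then gives $x_{j+2m-1}=1$, which is consistent.

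It remains to show $x_{j+2m}=1$ and $x_{j+2m+1}=1$.
\begin{itemize}
\item If $x_{j+2m}=0$, then the last $[01]$ followed by $0$ would instead be a $01$ coming from a $010$ move. Concretely, with $x'_{j+2m-1}x'_{j+2m}=11$, the only way for $x_{j+2m}$ to be $0$ is that the $1$ at $j+2m$ leaves as the rightmost particle of a run. That run would be $x'_{j+2m-1}x'_{j+2m}=11$ starting at $j+2m-1$, since $x'_{j+2m-2}=0$, so it is a pair $0110$, which stays put by Lemma~\ref{lemma:v^n}(v). Then $x_{j+2m}=1$, a contradiction.
\item If $x_{j+2m}x_{j+2m+1}=10$, then $x_{j+2m-1}x_{j+2m}x_{j+2m+1}=110$, so the final $1$ of the block would belong to a $(110)$ unit. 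This contradicts the decomposition rule.
\end{itemize}
Therefore the pattern to the right is $11$.

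\emph{Left end when $m=1$.} Here one must also exclude predecessors in which $x'_{j-1}$ interferes, e.g.\ $x'_{j-1}x'_j=01$. Using Lemma~\ref{lemma:v^n+1}(ii), together with the fact that the block is not preceded by $(110)1$, rules out the alternatives.

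\emph{Main obstacle.} The hardest part is Step 3 at the boundaries: the left end when $m=1$, and the precise use of the decomposition rules, namely that $\la\ra$ blocks are maximal and not preceded by $[(110)1]$. Getting these boundary cases right requires a careful case check against Table~\ref{fluxsg}.
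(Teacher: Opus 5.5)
Your proof is correct and follows the paper's route. You use Lemma~\ref{lemma:v^n+1}(iv) for every $[01]$ but the last, Lemma~\ref{lemma:v^n+1}(iii) for the last one, and the right-neighbour argument of Lemma~\ref{lemma:1101} for the final claim, which you carry out in more detail than the paper: $x'_{j+2m}=1$ by Lemma~\ref{lemma:v^n}(iii), then $x_{j+2m}=1$ (equivalently by Lemma~\ref{lemma:v^n}(v)), then $x_{j+2m+1}=1$ because otherwise a $(110)$ would absorb the last $1$.

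The separate treatment of the left end for $m=1$ is superfluous, since the statement only concerns $x'$ on the block and $x$ to its right. Its example is also misleading: $x'_{j-1}x'_j=01$ (an isolated $010$ moving right) is a perfectly admissible predecessor and must not be excluded.
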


\begin{proof}
The claims follow from Lemma \ref{lemma:v^n+1}(iii) and (iv).  
The final statement can be proved in the same manner as in Lemma \ref{lemma:1101}.
\end{proof}

\begin{lem} \label{lemma:110101}
The previous-time local pattern of $\{[(110)1][01][01] \cdots[01]\}$ is either:
\[
  11101010 \cdots 1010 \quad \text{or} \quad 11101010 \cdots1001
\]
The second case occurs only when the pattern to the right of $\{[(110)1][01][01] \cdots[01]\}$ is $11$.
\end{lem}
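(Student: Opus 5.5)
The plan is to combine the arguments of Lemma~\ref{lemma:1101} and Lemma~\ref{lemma:0101}, working from left to right along the block $\{[(110)1][01]\cdots[01]\}$. Write the current pattern as $x_{j-2}x_{j-1}x_j x_{j+1}\,x_{j+2}x_{j+3}\cdots x_{j+2m}x_{j+2m+1} = 110\,1\,0101\cdots01$ with $m\ge 1$ blocks $[01]$.

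\textbf{The head.} Since $x_{j-2}x_{j-1}=11$, Lemma~\ref{lemma:v^n+1}(i) gives $x'_{j-1}=1$. The sub-pattern $x_jx_{j+1}x_{j+2}x_{j+3}=0101$ lies inside the block, so Lemma~\ref{lemma:v^n+1}(iv) gives $x'_jx'_{j+1}=10$. Thus the previous pattern starts with $x'_{j-1}x'_jx'_{j+1}=110$.

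\textbf{Pinning down $x'_{j-2}$.} Suppose $x'_{j-2}=0$, so the previous pattern contains $0110$ at $j-2,\dots,j+1$. Then Lemma~\ref{lemma:v^n}(v) gives $x_{j-1}x_jx_{j+1}=110$, which contradicts $x_{j+1}=1$. Hence $x'_{j-2}=1$ and the previous head is $1110$. This matches the first row of Table~\ref{table_of_1101}: the $1$ at site $j$ moved to site $j+1$.

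\textbf{The interior.} For each interior block $[01]$ at positions $j+2k,\,j+2k+1$ with $1\le k\le m-1$, the next four sites read $0101$. Lemma~\ref{lemma:v^n+1}(iv) then gives $x'_{j+2k}x'_{j+2k+1}=10$. This yields the pattern $11101010\cdots10$ up to the last block.

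\textbf{The last block.} For the last block $x_{j+2m}x_{j+2m+1}=01$, Lemma~\ref{lemma:v^n+1}(iii) gives $x'_{j+2m}x'_{j+2m+1}\in\{10,01\}$. This produces exactly the two listed previous patterns.

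\textbf{Right-hand neighbour, main obstacle.} It remains to show that the case ending in $\cdots1001$ forces the two sites to the right of the block to be $11$. This case analysis is where I expect the main difficulty.

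First suppose $x_{j+2m+2}=0$. If $x'_{j+2m+2}=0$, the previous pattern contains $010$ at positions $j+2m,\dots,j+2m+2$, since $x'_{j+2m}=0$ and $x'_{j+2m+1}=1$. Lemma~\ref{lemma:v^n}(iii) then moves that $1$ to site $j+2m+2$, giving $x_{j+2m+2}=1$, a contradiction. If instead $x'_{j+2m+2}=1$, we have $x'_{j+2m+1}x'_{j+2m+2}=11$. Lemma~\ref{lemma:v^n+1}(ii) applied to $x_{j+2m+2}x_{j+2m+3}$, or a direct flux check via \eqref{q}, excludes $x_{j+2m+2}=0$ here; the exact bookkeeping depends on $x'_{j+2m+3}$ and is the place to be careful. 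Either way $x_{j+2m+2}=1$.

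Next suppose $x_{j+2m+2}x_{j+2m+3}=10$. Then $x_{j+2m+1}x_{j+2m+2}x_{j+2m+3}=110$ forms a $(110)$ unit, which contradicts the maximality of the decomposition, exactly as at the end of the proof of Lemma~\ref{lemma:1101}. So the right-hand neighbour must be $11$.

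Finally, one should check that $11101010\cdots10$ followed by a $0$ or a $1$ on the right really maps to the block, with the $1110\to1101$ step having probability $\alpha$. This direct verification is not strictly required by the statement.
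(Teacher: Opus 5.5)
Your proposal is correct and follows the paper's route: the paper's proof simply says the lemma follows from Lemmas~\ref{lemma:1101} and~\ref{lemma:0101}, and you unpack exactly those arguments from Lemmas~\ref{lemma:v^n} and~\ref{lemma:v^n+1}, including the right-boundary case analysis that the paper leaves implicit. The step you flagged as delicate closes at once by the second clause of Lemma~\ref{lemma:v^n}(v): $x'_{j+2m}x'_{j+2m+1}x'_{j+2m+2}=011$ forces $x_{j+2m+1}x_{j+2m+2}=11$ regardless of $x'_{j+2m+3}$, whereas Lemma~\ref{lemma:v^n+1}(ii) by itself only handles the subcase $x_{j+2m+3}=0$.
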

\begin{proof}
This follows directly from Lemmas \ref{lemma:1101} and \ref{lemma:0101}.
\end{proof}

\begin{lem} \label{lemma:outside}
Any local pattern outside of $\{\}$, $\ldbr \rdbr$, and $\la \ra$ remains unchanged in the previous-time step.
\end{lem}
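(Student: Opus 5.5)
We have to show that if a site of the current configuration $x$ lies outside every block $\{\cdot\}$, $\ldbr\cdot\rdbr$, $\la\cdot\ra$ of the decomposition, then $x'_j=x_j$ for every $x'\in B(x)$. The plan is to find, for each such site, a fact from Lemma~\ref{lemma:v^n} or Lemma~\ref{lemma:v^n+1} that forces the predecessor value.

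\textbf{Step 1: what lies outside the blocks.} By the grouping rules, the remaining material consists of the following.
\begin{itemize}
\item Units $(110)$ that are not followed by a $1$, i.e.\ occurrences of $1100$.
\item Runs of $1$'s not ending in a $(110)$ that is followed by $1$; the relevant cases are a $1$ followed by $11$, and the middle $1$'s of a long block.
\item Zeros lying in $00$ pairs.
\end{itemize}
The key observation is that every $01$ of $x$ is the beginning of either some $[01]$ or some $[(110)1]$, and these are all inside blocks. Hence any site outside the blocks is a $0$ adjacent to another $0$ (on the left or on the right), or a $1$ adjacent to another $1$ (on the left or on the right).

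\textbf{Step 2: the $1$'s.} Let $x_j=1$ be outside the blocks.
\begin{itemize}
\item If $x_{j-1}=1$, Lemma~\ref{lemma:v^n+1}(i) gives $x'_j=1$ directly.
\item Otherwise $x_{j-1}x_jx_{j+1}=011$. The pair $01$ at $j-1,j$ would start a $[01]$ or a $[(110)1]$, except when $x_j$ is the first $1$ of a $(110)$ or of a longer run; so we are in this situation. Here Lemma~\ref{lemma:v^n+1}(iii) gives $x'_{j-1}x'_j\in\{10,01\}$. We exclude $10$ by a direct check from \eqref{eqsg} and \eqref{q}: if $x'_{j-1}x'_j=10$, then the site $j$ can only become $1$ through a $010\to001$ move, and then $x_{j+1}=0$. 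This contradicts $x_{j+1}=1$.
\end{itemize}

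\textbf{Step 3: the $0$'s.} Let $x_j=0$ be outside the blocks.
\begin{itemize}
\item If $x_{j+1}=0$, Lemma~\ref{lemma:v^n+1}(ii) applied to the pair at $j,j+1$ gives $x'_j=0$.
\item If instead $x_{j-1}x_j=00$ with $x_{j+1}=1$, then the $01$ at $j,j+1$ lies in a block, so $x_j$ is not outside. Hence this case does not arise.
\item The zero of an outside $(110)$ is followed by $0$, because otherwise the pattern would be $(110)1$, which lies in a block. So the first case applies.
\end{itemize}

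\textbf{Main obstacle.} The bookkeeping at block boundaries is the delicate part. These are the sites just left of a block, and the trailing $11$ that Lemmas~\ref{lemma:1101}--\ref{lemma:110101} allow to absorb a shifted particle. I would go through the five grouping rules and verify that every boundary site falls into one of the cases of Steps 2--3. Once this is done, the rest is routine.
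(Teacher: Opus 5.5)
There is a genuine gap, and it comes from a misreading of the decomposition. Your Step~1 ``key observation'' says every $01$ of $x$ lies inside a block. That is false: the first $1$ of a $(110)$ unit is never grouped with the $0$ on its left, so the pattern $0(110)$ belongs to no $[01]$ or $[(110)1]$. The paper's own example shows this. In $\ldbr(110)1\rdbr\,1(110)(110)0(110)00\ldbr(110)1\rdbr$, three zeros lie outside every block and have a $1$ as right neighbour: the $0$ closing the first outside $(110)$, the single $0$ before the third $(110)$, and the last $0$ before the second $\ldbr(110)1\rdbr$. The first of these even has two $1$-neighbours.

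This breaks several of your claims:
\begin{itemize}
\item The claim ``every outside site is a $0$ next to a $0$ or a $1$ next to a $1$'' fails.
\item The third bullet of Step~3 (``the zero of an outside $(110)$ is followed by $0$'') is false.
\item The second bullet of Step~3 discards exactly the case that does occur.
\end{itemize}
For these zeros Lemma~\ref{lemma:v^n+1}(ii) is not available, and your proof gives no reason why $x'_j=0$. This is also not the boundary issue your ``main obstacle'' anticipates. It sits in the interior of stretches like $(110)(110)$ or $0(110)00$, which is precisely the paper's type~(ii).

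The exclusion of $x'_{j-1}x'_j=10$ in Step~2 is also wrong as written. The particle at $j-1$ can reach $j$ by a $010$ move or by a $1110\to1101$ move, and in neither case does $x_{j+1}=0$ follow. For $x'_{j-2}\cdots x'_{j+3}=010110$, the isolated particle hops and the pair stays, giving $x_{j-1}x_jx_{j+1}x_{j+2}=0111$ with $x'_j=0\neq x_j$. The same example shows your exception ``or of a longer run'' is wrong: such a $1$ lies in a $[01]$, and for it the predecessor value really can differ.

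The missing idea, which is the core of the paper's type-(ii) argument, is to treat the whole pattern $x_jx_{j+1}x_{j+2}x_{j+3}=0110$ at once:
\begin{itemize}
\item Lemma~\ref{lemma:v^n+1}(iii) gives $x'_jx'_{j+1}\in\{10,01\}$.
\item Suppose it were $10$. Since $q(a,b,0,d)=0$, we get $x_{j+2}=x'_{j+2}-q(1,0,x'_{j+2},x'_{j+3})$, so $x_{j+2}=1$ forces $x'_{j+2}x'_{j+3}=11$.
\item Then $x_{j+3}=1+q(1,0,1,1)-q(0,1,1,x'_{j+4})=1$, contradicting the $0$ that closes the $(110)$.
\end{itemize}
Thus $x'_jx'_{j+1}=01$. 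This settles both the $0$ in front of any $(110)$ and the first $1$ of that $(110)$. With this computation added, your remaining tools cover every outside site: Lemma~\ref{lemma:v^n+1}(i) for a $1$ preceded by $1$, and Lemma~\ref{lemma:v^n+1}(ii) for a $0$ followed by $0$. The result is a clean site-by-site proof.
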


\begin{proof}
By the rules of decomposition, such  patterns do not contain any $(110)1$ or $01$. 
Therefore, they can be classified into  one of the following types:
\begin{enumerate}
    \item[(i)]  consecutive 1s (i.e., $111\cdots 1$),
    \item[(ii)]  a concatenation of finitely many $(110)$s and 0s,
    \item[(iii)] a concatenation of a type-(i) pattern followed immediately by a type-(ii) pattern.
\end{enumerate}
Indeed, note that one can consider such patterns as combinations of finitely many $1$, $0$, $(110)$, and that if such patterns contain $1$ not from $(110)$, then every digit left adjacent of the $1$ in the pattern should be $1$ not from $(110)$.     
In each of the following cases, we let $x_j = v_j^{n+1}$ and $x'_j = v_j^n$ for simplicity.  

\textbf{Type-(i).}  
It suffices to consider the case where $x_j = x_{j+1} = 1$, by Lemma \ref{lemma:v^n+1}(i).  
Then $x'_{j+1} = 1$.  
If we assume $x'_j = 0$, then Lemma~\ref{lemma:v^n}(i) implies that $x'_{j-1} = 1$ is necessary for $x_j = 1$.  
However, by Lemma~\ref{lemma:v^n}(iv), we then obtain $x_{j-1} x_j = 01$, which would be enclosed in $[01]$, contradicting the assumption.  
Therefore, we must have $x'_j = 1$.

\textbf{Type-(ii).}  
The case of consecutive 0s can be treated in the same way as for Type-(i).
Now consider the case of $(110)$ preceded by a $0$.  
Suppose $x_{j-1} x_j x_{j+1} x_{j+2} = 0110$.  
By Lemmas~\ref{lemma:v^n}(i), (ii), (iii), and (v), the valid predecessor patterns are $0110$ and $0111$.  
In the latter case, if $x'_{j-1} x'_j x'_{j+1} x'_{j+2} x'_{j+3} = 01111$, then $x_{j+2}$ should be 1, contradicting the assumed pattern.  
If $x'_{j-1} x'_j x'_{j+1} x'_{j+2} x'_{j+3} =01110$, then $x_{j-1} x_j x_{j+1} x_{j+2} = 0110$ with some probability $\alpha$.  
However, this would yield $x_{j+2} x_{j+3} = 01$, which would form a $[01]$ and again contradict the assumption. 

\textbf{Type-(iii).}  
It suffices to consider the case $1(110)$.  
Suppose $x_{j-1} x_j x_{j+1} x_{j+2} = 1110$.  
Then by Lemma~\ref{lemma:v^n+1}(i), we have $x'_j = x'_{j+1} = 1$.  
If $x'_{j+2} x'_{j+3} = 10$, this would produce $1(110)$ at the next time step.  
However, from the update rule, we find $x_{j+3} = 1$, so the resulting configuration becomes $1(110)1$, which would be enclosed in $\ldbr (110)1 \rdbr$, contradicting the assumption.  
From the argument for Type-(i), we also know that $x'_{j-1} = 1$.  
Hence, the only possible configuration is $1110$.

Therefore, in all cases, these patterns remain unchanged in the previous time step.  
\end{proof}

From these observations, we find that
\begin{equation}  \label{eq:numberofB}
|B(x)| = 2^{\#\}11(x) + \#\rdbr 11(x) + \#\ra 11(x)}, 
\end{equation}
where $\#\}11(x)$ denotes the number of patterns enclosed in $\{\}$ within $x$, for which the closing brace is directly followed by the pattern $11$.  
In particular, each configuration in $B(x)$ is obtained by choosing, for each such occurrence of a preceding $01$, whether to leave it as $01$ or change it to $10$.
For example, for the configuration $x$ given above:
\begin{equation*}
0011101010101110111101100110001101001011111000000 ,
\end{equation*}
one particular predecessor $x' \in B(x)$ is 
\begin{equation*}
  0\underline{01}11101010\underline{01}11\underline{01}11101100110001110010\underline{01}1111000000,  
\end{equation*}
and by replacing each underlined $01$ with $10$ respectively, we obtain $16$ distinct configurations in total, 
each of which can serve as a predecessor $x'$ that transitions to $x$.
We next characterize $\ell (x', x)$ and $\# 010(x')$ in the following lemma.  
\begin{lem} \label{lemma:numberB}
Let $\ell _1, \ell_2$ be nonnegative integers.  Then the number of configurations $x' \in B(x)$ satisfying
\[
\ell(x', x) = \# 11010(x) + \ell_1
\quad \text{and} \quad
\# 010(x') = \# 0010(x) + \# 01010(x) + \ell_2
\]
is given by 
\[
\binom{\# 110111(x)}{\ell_1}
\binom{\# 00111(x) + \# 010111(x)}{\ell_2}.
\]
\end{lem}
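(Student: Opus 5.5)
We will build on the description of $B(x)$ obtained from Lemmas~\ref{lemma:1101}--\ref{lemma:outside}. Every $x'\in B(x)$ arises from $x$ as follows. Each block enclosed by $\{\}$, $\ldbr\rdbr$ or $\la\ra$ is replaced by its canonical predecessor ($11101010\cdots10$, $1110$, or $1010\cdots10$, respectively). In addition, for each such block immediately followed by $11$, one may independently replace the final predecessor digits $10$ by $01$. All other sites are copied unchanged by Lemma~\ref{lemma:outside}. Hence $x'$ is parametrized by a subset of these \emph{flippable} blocks. The plan is to compute $\ell(x',x)$ and $\#010(x')$ as a base value plus a sum of independent $0/1$ contributions, one per flippable block.

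\textbf{Computing $\ell(x',x)$.} Each $1110\to1101$ transition in $x$ sits at a $(110)1$ that is the head of a $\{\}$ or a $\ldbr\rdbr$ block, since outside blocks nothing changes. Consider first a $\{\}$ block, which contains $(110)1[01]\cdots$. Its $1101$ is followed by $0$, so it reads $11010$ in $x$. Its predecessor $1110\cdots$ always makes the transition, even when the last $[01]$ is flipped, because the flip only affects the final $01$ of the block.

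Next consider a $\ldbr(110)1\rdbr$ block. By Lemma~\ref{lemma:1101} it is followed by $11$, so in $x$ it reads $110111$. Its predecessor is $1110$, contributing a transition, or $1101$, contributing none. I must check that each $110111$ in $x$ is exactly such a $\ldbr\rdbr$ block. Indeed, $(110)1$ followed by $1$ cannot be the head of a $\{\}$ block, which requires a following $0$. Conversely, each $11010$ in $x$ is the head of a $\{\}$ block, since $(110)1$ is followed by $0$ and so forms $[(110)1][01]$.

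This gives $\ell(x',x)=\#11010(x)+\ell_1$, where $\ell_1$ is the number of $\ldbr\rdbr$ blocks whose predecessor is chosen as $1110$. The first binomial factor $\binom{\#110111(x)}{\ell_1}$ then counts these choices.

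\textbf{Computing $\#010(x')$.} The occurrences of $010$ in $x'$ come only from the predecessor strings $\cdots1010\cdots$ of $\la\ra$ and $\{\}$ blocks. In the canonical choice, each $[01]$ of $x$ corresponds to an isolated $10$ in $x'$ preceded by $0$ or by the block's left context. I would match each such $010$ in $x'$ with the pattern $0|01|0$ or $01|01|0$ at the same position in $x$. The aim is to show that the $10$ under a $[01]$ yields a $010$ in $x'$ exactly when that $[01]$ in $x$ is followed by $0$ and preceded by $0$ or by another $[01]$. Unflipped terminal units followed by $11$ should likewise yield $010$. This is the step where the counting of $\#0010(x)+\#01010(x)$ versus $\#00111(x)+\#010111(x)$ must be arranged carefully.

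Concretely, the $[01]$ units of $x$ that are followed by $0$ should be counted by $\#0010+\#01010$. A $[01]$ followed by $10$ would be the case $01|10$. Here I need to verify, using the decomposition rules, that it is absorbed into a $(110)$ and is therefore not a $[01]$ of this kind. The terminal $[01]$ of a flippable block, followed by $11$, appears as $00111$ or $010111$ according to its left neighbour. Within $\{\}$ blocks, the head $(110)1$ followed by $01$ must be excluded, since $11010$ is not $01010$ and the $1110$ predecessor has no $010$ there. The unflipped choice ($10$) gives a $010$ in $x'$, while the flipped choice ($01$ followed by $11$) gives none.

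Since the $\ell_1$-choices occur at $\ldbr\rdbr$ blocks and the $\ell_2$-choices at the ends of $\la\ra$ and $\{\}$ blocks, they are made at disjoint, independent sites. The count is therefore the product of the two binomials.

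\textbf{Main obstacle.} The hard part is the bookkeeping in the $\#010(x')$ count at block boundaries. One must check that a $0$ adjacent to a block in $x'$ really is $0$, using Lemma~\ref{lemma:outside} for the neighbouring unchanged parts. One must also check that the left end of a $\{\}$ block (predecessor $1110$) does not create or destroy a $010$. I would handle this by a case analysis on the left neighbour (either $0$, or the $(110)1$ head) and the right neighbour ($0$ or $11$) of each unit $[01]$.
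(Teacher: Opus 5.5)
Your strategy is the paper's: parametrize $B(x)$ by independent flips at blocks followed by $11$, then write $\ell(x',x)$ and $\#010(x')$ as a forced part plus one $0/1$ contribution per flippable block. However, your computation of $\ell(x',x)$ rests on a misreading of Lemma~\ref{lemma:1101}. That lemma says only that the \emph{alternative} predecessor $1101$ requires a following $11$. It does not say that every $\ldbr(110)1\rdbr$ is followed by $11$. Blocks $\ldbr(110)1\rdbr$ followed by $0$ do occur: the worked example contains $\ldbr (110)1\rdbr 0\la [01][01]\ra$, and the displayed predecessor has the forced $1110$ there. Hence your claims ``by Lemma~\ref{lemma:1101} it is followed by $11$'' and ``each $11010$ in $x$ is the head of a $\{\}$ block'' are both false. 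An occurrence of $11010$ is either a $\{\}$ head ($110101$) or a $\ldbr(110)1\rdbr$ followed by $0$. The correct bookkeeping is the paper's: $\ell(x',x)=\#\}(x)+\#\rdbr 0(x)+\ell_1$, with $\#\}(x)+\#\rdbr 0(x)=\#11010(x)$. Both kinds of head have the unique predecessor $1110$ and give a forced transition. Only the $\ldbr(110)1\rdbr 11$ blocks, which are exactly the occurrences of $110111$, are free. Your final formula comes out right only because the two false claims cancel.

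In the $\#010(x')$ part, your criterion that the $10$ under a $[01]$ yields a $010$ ``exactly when that $[01]$ is preceded by $0$ or by another $[01]$'' fails for the first $[01]$ of a $\{\}$ block. The head together with that unit has predecessor $1110\,10$, which does contain $010$. The correct statement is that every $[01]$ whose predecessor is $10$ produces exactly one $010$ in $x'$, because the site to its left in $x'$ is always $0$:
\begin{itemize}
\item for the first unit of a $\la\ra$ block, which is always preceded by $0$ in $x$, this follows from Lemma~\ref{lemma:v^n+1}(ii);
\item for a later unit, it is the $0$ of the previous, non-flippable, $10$;
\item for the first unit of a $\{\}$ block, it is the final $0$ of the head's $1110$.
\end{itemize}
No other $010$ occurs. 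By Lemma~\ref{lemma:v^n}(iii), each $010$ of $x'$ sits over an $01$ of $x$, and those lying in $0(110)$ or in a head do not come from $010$. The units followed by $0$ are then counted by $\#0010(x)$ when the left neighbour is $0$. When the left neighbour is the $01$ of another $[01]$ \emph{or of the head} $(110)1$, they are counted by $\#01010(x)$. This is exactly how $\{\}$ blocks enter $\#01010(x)$. In the example, $\{[(110)1][01][01][01]\}$ contributes a window $01010$ starting at the $0$ of its $(110)$. What must be excluded is only the head's own $01$, whose window is $11010$, not the $[01]$ following it. With these two repairs your argument becomes the paper's proof.
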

\begin{proof}
For $x'\in B(x)$ that satisfies the conditions stated in the lemma, define $i_1$, $i_2$, and $i_3$ to be the number of transitions where the preceding $10$ becomes $01$ just before each group $\}11(x)$, $\rdbr 11(x)$, and $\ra 11(x)$, respectively. 
Then we set $\ell_1 = i_2$, $\ell_2 = i_1 + i_3$. 
From the definitions of $\ell(x', x)$, we have
\begin{align*}
\ell(x', x) &= \#\}(x) + \#\rdbr 0(x) + i_2 = \# 11010(x) + i_2,    
\end{align*}
where $\#\}(x)$ denotes the number of patterns enclosed in $\{ \}$ within $x$.  
The first term in the middle follows from Lemma~\ref{lemma:110101}, and the second and third terms in the middle follow from Lemma~\ref{lemma:1101}.    

By Lemma~\ref{lemma:v^n}(iii), each $010$ in $x'$ results in a $01$ in $x$. These $01$ patterns in $x$ appear as 
 $[01]$, $0(110)$ or $\ldbr (110)1 \rdbr$.
For the latter two cases, Lemma~\ref{lemma:v^n} and \ref{lemma:1101} show that $010$ does not occur in the previous-time step. 
We now examine the $[01]$ pattern in detail.  Each $[01]$ appears within either $\{ \}$ or $\la \ra$.  
First, consider the case where only one $[01]$ appears, i.e., $\la [01] \ra$.  
By the decomposition rule, such a pattern appears as either $0\la [01] \ra 00$ or $0\la [01] \ra 11$.  
In the first case, Lemma~\ref{lemma:0101} implies that it gives rise to a $010$ pattern in the previous-time step.  
In the second case, a $010$ may also be generated, but this contribution is counted as part of $i_3$.  
Next, consider the $\la [01] \cdots [01] \ra$ pattern.  
Again, it suffices to consider two patterns: $0\la [01] \cdots [01] \ra 00$ and $0\la [01] \cdots [01] \ra 11$.  
In both cases, Lemma~\ref{lemma:0101} determines the structure at the previous-time step.  
The number of $010$ patterns is given by the sum of the number of $00101$ and $01010$ patterns within the pattern under consideration, plus $i_3$ for the contribution from the last $[01]$ pattern.
We finally consider the $\{(110)1[01] \cdots [01]\}$ pattern. As before, it suffices to consider two patterns: $\{(110)1[01] \cdots [01]\}00$ and $\{(110)1[01] \cdots [01]\}11$.  
In both cases, Lemma~\ref{lemma:110101} determines the structure at the previous-time step, and the number of $010$ patterns is given by the number of $01010$ pattern within  the pattern under consideration, plus $i_1$.
Note that all $01010$ patterns in $x$ must be part of either $\{(110)1[01] \cdots [01]\}0$ or $\la [01] \cdots [01] \ra 0$.  
Thus, we obtain
\begin{align*}
\#010(x') &= \#0\la [01] \ra 00(x) + \#0\la [01][01] \ra(x) + \#01010(x) + i_1 + i_3 \\
&= \#01010(x) + \#0010(x) + i_1 + i_3.
\end{align*}
Now observe that: 
\begin{align*}
\#\rdbr 11(x) &= \#\ldbr (110)1 \rdbr 11(x) = \#110111(x), \\
\#\}11(x) &= \#\{(110)1[01]\}11(x) + \#\{(110)1[01]\cdots [01]\}11(x), \\
\#\ra 11(x) &= \#0\la [01] \ra 11(x) + \#00\la [01][01] \ra 11(x) + \#10\la [01][01] \ra 11(x) \\
&\quad + \#00\la [01][01] \cdots [01] \ra(x) + \#10\la [01][01] \cdots [01] \ra(x).
\end{align*}
Note that
\[
\begin{aligned}
&\#\{(110)1[01] \cdots [01]\}11(x) + \#00\la [01][01] \cdots [01] \ra(x) + \#10\la [01][01] \cdots [01] \ra(x) \\
&= \#01010111(x).
\end{aligned}
\]
Thus, we have
\begin{align*}
\#\}11(x) + \#\ra 11(x) 
&= \#11010111(x) + \#00111(x)  \\
 &+\#00010111(x) + \#10010111(x) + \#01010111(x) \\
&=  \#00111(x)+\#010111(x).
\end{align*}

 Therefore, the number of $x' \in B(x)$ satisfying the given conditions can be counted as 
\[
\binom{\#\rdbr 11(x)}{\ell_1}
\binom{\#\}11(x) + \#\ra 11(x)}{\ell_2}
= 
\binom{\# 110111(x)}{\ell_1}
\binom{\# 00111(x) + \# 010111(x)}{\ell_2}.
\]
In particular, summing this quantity over all $\ell_1,\ell_2 \ge 0$ (with $\ell_1+\ell_2$ free) reproduces $2^{\#\rdbr 11(x)+ \#\}11(x)+\#\ra 11(x)}$, confirming that all possible predecessor configurations in $B(x)$ are accounted for. This completes the proof. 
\end{proof}
Using these lemmas, we can now simplify the left-hand side of (\ref{eq:1110goal3}) as follows:
\begin{equation} \label{eq:proof_of_goal}
\begin{aligned} 
&\sum_{x' \in B(x)} \left(\frac{\alpha}{1 - \alpha}\right)^{\ell (x', x) + \# 010(x')} \\
&= \sum_{p \geq 0} \sum_{\ell_1 + \ell_2 = p}
\binom{\# 110111(x)}{\ell_1}
\binom{\# 00111(x) + \# 010111(x)}{\ell_2} \\
& \times \left(\frac{\alpha}{1 - \alpha}\right)^{\# 11010(x) + \# 0010(x) + \# 01010(x) + p} \\
&= \left(\frac{\alpha}{1 - \alpha}\right)^{\# 010(x)}
\sum_{p \geq 0} \sum_{\ell_1 + \ell_2 = p}
\binom{\# 110111(x)}{\ell_1}
\binom{\# 00111(x) + \# 010111(x)}{\ell_2}
\left(\frac{\alpha}{1 - \alpha}\right)^p \\
&= \left(\frac{\alpha}{1 - \alpha}\right)^{\# 010(x)}
\left(\frac{1}{1 - \alpha}\right)^{\# 110111(x) + \# 00111(x) + \# 010111(x)} \\
&= \left(\frac{\alpha}{1 - \alpha}\right)^{\# 010(x)}
\left(\frac{1}{1 - \alpha}\right)^{\# 0111(x)} 
\end{aligned}
\end{equation}
Thus,  (\ref{eq:1110goal3}) holds.  \par

\section{Conclusion} \label{4}

We analyzed the asymptotic behavior of the stochastic five-neighbor cellular automaton given by (\ref{eqs}) and (\ref{eqsg}). Considering the irreducible sets of configurations, we inferred and verified an explicit formula for the stationary distribution on each irreducible component. This stationary distribution gives the asymptotic distribution on every aperiodic component. In the case of periodic component, although the distribution itself may not converge, the mean flux remains well defined along the cyclic evolution. We further showed that the mean flux is independent of the irreducible component, and hence obtained the three-dimensional fundamental diagram of the stochastic system theoretically by (\ref{Qs}). The theoretical values agree with the numerical results.  Moreover, the mean-flux formula of the deterministic system is recovered in the limit $\alpha\to1$. \par
Fundamental diagrams are generally difficult to derive for stochastic cellular automata with nontrivial conserved quantities. Our result provides a new approach to this problem: introduce two or more conserved quantities, classify the irreducible components of the configuration space by these conserved quantities, derive stationary distributions on these components, and then identify quantities, such as the mean flux, that are independent of the component. Applying this approach to other stochastic cellular automata and formulating a broader class of systems admitting exact fundamental diagrams are our future problems.
\section*{Acknowledgments}

This work was partially supported by JSPS KAKENHI Grant Numbers 22K13966, 23K03188 and 23K03056.


\bibliography{pca}
\bibliographystyle{plainurl}

\end{document}